
\documentclass[journal,10pt]{IEEEtran}
\usepackage{verbatim}
\usepackage{amsfonts,amssymb,amsmath}
\usepackage{hyperref}
\usepackage{multirow}
\usepackage{graphicx}
\usepackage{caption}
\usepackage{amsfonts,amssymb,amsmath}
\usepackage{hyperref}
\usepackage{embrac}
\newtheorem{theorem}{Theorem}
\newtheorem{proof}{Proof}
\usepackage{subeqnarray}
\usepackage{cases}
\usepackage{algorithm}
\usepackage{algpseudocode}
\usepackage{cuted}
\usepackage{tabularx} 
\makeatletter
\newenvironment{breakablealgorithm}
{
		\begin{center}
			\refstepcounter{algorithm}
			\hrule height.8pt depth0pt \kern2pt
			\renewcommand{\caption}[2][\relax]{
				{\raggedright\textbf{\ALG@name~\thealgorithm} ##2\par}%
				\ifx\relax##1\relax 
				\addcontentsline{loa}{algorithm}{\protect\numberline{\thealgorithm}##2}%
				\else 
				\addcontentsline{loa}{algorithm}{\protect\numberline{\thealgorithm}##1}%
				\fi
				\kern2pt\hrule\kern2pt
			}
		}{
		\kern2pt\hrule\relax
	\end{center}
}
\makeatother

\begin{document}

\title{Opportunistic Routing aided Cooperative Communication MRC Network with Energy-Harvesting Nodes}

\author{Lei Teng, Wannian An, Chen Dong, Xiaodong Xu, Boxiao Han

\thanks{This work was supported in part by State Key Laboratory of Networking and Switching Technology, Beijing University of Posts and Telecommunications, and in part by Beijing University of Posts and Telecommunications-China Mobile Research Institute Joint Innovation Center (Project Number: R207010101125D9).\emph{(Corresponding author: Chen Dong.)}}
\thanks{Lei Teng, Wannian An, Chen Dong and Xiaodong Xu are with the School of Information and Communication Engineering Beijing University of Posts and Telecommunications(e-mail: tenglei@bupt.edu.cn; anwannian2021@bupt.edu.cn; dongchen@bupt.edu.cn; xuxiaodong@bupt.edu.cn).}
\thanks{Boxiao Han is with the Future Mobile Technology Lab China Mobile Research Institude(e-mail: hanboxiao@chinamobile.com)}}

\markboth{}
{}

\maketitle

\begin{abstract}
In this research, we present a cooperative communication network based on two energy-harvesting (EH) decode-and-forward (DF) relays that contain the harvest-store-use (HSU) architecture and may harvest energy from the surrounding environment by utilizing energy buffers. To improve the performance of this network, an opportunistic Routing (OR) algorithm is presented that takes into account channel status information, relay position, and energy buffer status, as well as using the maximal ratio combining (MRC) at the destination to combine the signals received from the source and relays. The theoretical expressions for limiting distribution of energy stored in infinite-size buffers are derived from the discrete-time continuous-state space Markov chain model (DCSMC). In addition, the theoretical expressions for outage probability, throughput, and per-packet timeslot cost in the network are obtained utilizing both the limiting distributions of energy buffers and the probabilities of transmitter candidates set. Through numerous simulation results, it is demonstrated that simulation results match with corresponding theoretical results.
\end{abstract}

\begin{IEEEkeywords}
 Energy-harvesting, opportunistic routing, maximal ratio combining, state transition matrix.
\end{IEEEkeywords}

\IEEEpeerreviewmaketitle

\section{Introduction}

In recent years, due to the emphasis on green communications, there has been a great deal of interest in EH technique, which can harvest energy from the surrounding ambient source such as radio frequency energy, light energy, thermal energy, etc\cite{article1}. In addition to the advantages of energy saving, EH can also extend battery lifetime compared to the source charging the battery, which is helpful in the complicated case of replacing the battery\cite{article2}. For these reasons, numerous articles emphasize the combination of EH and communications.
\par In \cite{3}, a relay selection approach for the EH wireless body area network that takes into account both channel state information (CSI) and the energy condition of the relay buffer is proposed. For the sake of obtaining a tradeoff between the transmission energy and decoding energy, the power splitting protocol is employed at the EH relay of the EH-based cooperative communication network in \cite{4} and \cite{5}. In these papers, the harvest-use (HU) architecture is employed to manage energy, meaning that the node will instantly use the gathered energy.
	\par On the other hand, the harvest-store-use (HSU) architecture, which is widely explored as an efficient technique of energy harvesting from the environment, is comprised of three energy management steps: energy harvesting, energy storage, and energy use. In \cite{article3}, a simple communication system with EH using HSU architecture is proposed, assuming that the EH node harvests energy from a radio frequency signal broadcast by an access point in the downlink and uses the stored energy to transmit data in the uplink. Unlike \cite{article3}, which only investigated the on-off policy, both the best-effort policy and the on-off policy are analyzed in \cite{article4}. Both \cite{article3} and \cite{article4} derived the limiting distribution of stored energy in the energy buffer at the EH node by using the theory of discrete-time continuous-state Markov chains. The difference between the on-off policy and the best-effort policy is whether set target energy to transmit information or not. In the on-off policy, only when target M units of energy are available in the energy buffer, the EH node can transmit information to the destination at the cost of M units of energy. In the best-effort policy, no matter how much energy is in the energy buffer, the EH node uses all energy to transmit information to the destination. To maximize the end-to-end system throughput under data and energy storage constraints, the online and offline optimization algorithms for joint relay selection and power control have been discussed in \cite{article5}.
	\par In the aforementioned papers, all EH nodes harvest energy from RF signals received from the downlink, which is not efficient when the path loss in practical communication links is high. And they only consider the relay link without a direct link. The self-sustaining node (SSN), which harvests energy from the ambiance, is regarded as a solution to the high path loss link in the EH network. Self-sustaining node and direct link are covered in \cite{article6}, where the limiting distributions of energy for both the incremental on-off policy and the incremental best-effort policy are derived by using a discrete-time continuous-state space Markov chain. In addition, the expressions for outage probability and throughput are obtained. In this process, the maximal ratio combining technique is used to improve the performance of the network. In all of the papers mentioned above, each source node in the cooperative relay network is powered by the power source. For the sake of further energy saving, the source node is equipped with the EH technique in \cite{article7}. A decoded-and-forward energy-harvesting system with multiple relays in the presence of transmitting hardware impairments is proposed in \cite{article8}. The system has two transmitting policies. For policy-1, the selection of the best relay that assists the source-to-destination communication is based on the first-hop channel-state-information. For policy-2, the selection of the best relay is based on the second-hop CSI. 
	\par All relay networks mentioned above are primarily founded on wireless two-hop relay networks with a single relay or multiple relays with the same priority. However, actually, the practical wireless cooperative network like a wireless sensor network\cite{article9} or an Internet of Things (IoT) network\cite{article10} is the wireless multi-hop network. Opportunistic routing as a solution to improve the performance of the wireless multi-top network has been widely studied\cite{article11}. An energy-efficient OR, which exploits cross-layer information exchange and Energy-consumption-based Objective Functions, is proposed in \cite{article12} and \cite{article13} where it has been clarified that OR outperforms the traditional routing with the specific preselected route in terms of system performance. And in \cite{anwannian}, with using OR algorithm, a cooperative communication network based on two energy-harvesting decode-and-forward relays is proposed. 
	\begin{table}[h!]
		\caption{ Comparsion of references}
		\label{tab:2} 
		\begin{tabular}{|c|c|c|c|c|}
			\hline
			Reference &  EH buffer architecture & SSN &OR&MRC \\
			\hline
			\cite{3},\cite{4},\cite{5}&HU&N&N&N\\
			\hline
			\cite{article3},\cite{article4},\cite{article5}&HSU&N&N&N\\
			\hline
			\cite{article6},\cite{article7}&HSU&Y&N&Y\\
			\hline
			\cite{article8}&HSU&Y&N&N\\
			\hline
			\cite{article11}--\cite{article13}&No EH&N&Y&N\\
			\hline
			\cite{anwannian}&HSU&Y&Y&N\\
			\hline
			\cite{article14}&No EH&N&N&Y\\
			\hline
			Our work&HSU&Y&Y&Y\\
			\hline
		\end{tabular}\centering
	\end{table}
	\par According to the above researches, maximal-ratio combining as a well-studied technique\cite{article14}, which is helpful to improve performance in communication networks, is worth being combined with EH. This article is targeted at proposing and designing an analytical framework for the study of OR-aided cooperative communication networks with EH nodes and using MRC in our system. Table \ref{tab:2} shows the comparison between our work and the above references, where N indicates that the technology is not employed in the studied system, and Y indicates that the technology is employed in the studied system. In addition, we model energy buffers using discrete-time continuous-state space Markov chains. The contributions of this work are as follows:

\begin{enumerate}
  \item A cooperative network consisting of two EH DF relay nodes and employing the MRC and OR algorithms is proposed.
  \item The expressions for the limiting distributions of energy stored in buffers are obtained with the help of the DCSMC model and the state-transition-matrix-based (STM)  theoretical solutions of the probability of transmitter candidates set. The limiting distributions of energy are the base of further analysis.
  \item The closed-form expressions of outage probability and throughput are derived.
  \item Through numerous simulation results, it is demonstrated that simulation results match with corresponding theoretical results.
  
\end{enumerate}
\par The remaining paper is developed as follows:
Section II interprets the system model and OR protocol. In section III, the limiting PDFs of the energy stored in the buffers R1 and R2 are presented. In section IV, the expressions for outage probability and throughput are determined. In Section V, simulation performance results are displayed. Section VI is where the conclusion is provided.

$Notations:$

\begin{table}[h!]
	\caption{Notations table}
	\label{tab:1} 
	\begin{tabular}{|c|p{5.8cm}<{\centering}|}
		\hline
		Notation & Meaning \\
		\hline
		$A\sim\mathcal{CN}(0, \theta)$& the random variable $A$ follows the complex Gaussian distribution with mean 0 and variance $\theta$\\
		\hline
		$|B|$& the absolute value of $B$\\
		\hline
		$\mathbb{E}[\cdot]$&the expectation operator\\
		\hline
		$C_1 \cup C_2$& the union of the condition $C_1$ and the condition $C_2$\\
		\hline
		$C_1 \cap C_2$& the intersection of the condition $C_1$ and the condition $C_2$\\
		\hline
		$W(\cdot)$&the Lambert W function\\
		\hline
		Boldface capital $\textbf{T}$ & matrices or vectors\\
		\hline
		$\overline{\mathbb{C}}$&the comlementary event of $\mathbb{C}$\\
		\hline
		$*$&multiplication\\
		\hline
		\multirow{3}{*}{$|$}&conditional probability. For example, case A,B|C means event A and B occur under the condition of C.\\
		\hline
		\multirow{2}{*}{conv$(A,B)$}&discrete convolution $y(j)=\mathrm{conv}(A,B)=\sum_{i=-\infty}^{+\infty}A(i)B(j-i)$\\
		\hline
	\end{tabular}
\end{table}
In TABLE \ref{tab:1}, all notations are showed. \\

$Abbreviation:$

\begin{table}[h!]
	\caption{Abbreviations table}
	\label{tab:3} 
	\begin{tabular}{|c|p{5.8cm}<{\centering}|}
		\hline
		Abbreviation & Meaning \\
		\hline
		EH& energy harvesting\\
		\hline
		DF& decode and forward\\
		\hline
		HSU&harvest-store-use\\
		\hline
		OR& opportunistic routing\\
		\hline
		MRC& maximal ratio combing\\
		\hline
		DCSMC&discrete-time continuous-state space Markov chain\\
		\hline
		CSI& channel state information\\
		\hline
		HU&harvest-use\\
		\hline
		SSN&self-sustaining node\\
		\hline
		PEB&primary energy buffer \\
		\hline
		SEB&secondary energy buffer \\
		\hline
		ACK&positive acknowledgement \\
		\hline
		NACK&negative acknowledgement \\
		\hline
		TC&transmitter candidate \\
		\hline
	\end{tabular}
\end{table}
In TABLE \ref{tab:3}, all abbreviations are showed. 


\section{System Model and OR Protocol}
\subsection{System Model}
As shown in Fig. \ref{system}, the network proposed in this paper is composed of a source node S, two DF relay nodes R1 and R2, and a destination node D. All nodes are assumed as half-duplex nodes. Only the source and destination nodes are powered by the power supply, while relay nodes are equipped with energy buffers using HSU architectures to harvest ambient energy. In each time slot, only one node is activated and can transmit signals by broadcast way, while others are silent or receive the signals. Quasi-static Rayleigh fading is assumed between all nodes. Let $d_{ab}$ denote the distance between nodes with a and b. Hence, denoted by $h_{SD}(i)\sim\mathcal{CN}(0, d_{SD}^{\alpha})$, $h_{SR1}(i)\sim\mathcal{CN}(0, d_{SR1}^{\alpha})$, $h_{SR2}(i)\sim\mathcal{CN}(0, d_{SR2}^{\alpha})$, $h_{R1R2}(i)\sim\mathcal{CN}(0, d_{R1R2}^{\alpha})$, $h_{R1D}(i)\sim\mathcal{CN}(0, d_{R1D}^{\alpha})$ and $h_{R2D}(i)\sim\mathcal{CN}(0, d_{R2D}^{\alpha})$ the channel coefficients within the $i$-th time slot between S and D, S and R1, S and R2, R1 and R2, R1 and D, and R2 and D respectively, where the term $\alpha$ represents the path-loss exponent. 
\par The HSU architectures equipped by the relay nodes mainly have two energy buffers respectively, which are an infinite-size primary energy buffer (PEB) and an infinite-size secondary energy buffer (SEB), for the reason that storage devices are disabled to charge and discharge at the same time \cite{article15} \cite{article16}. The energy harvested from ambiance is stored in SEB first. Then, in order to make the energy buffer equipped by relay nodes R1 and R2 charge and discharge simultaneously, SEB transfers the energy to the PEB at the end of each timeslot. Finally, the energy in PEB is used to power the node to transmit signals. Due to the fact that the energy harvesting power is far smaller than the actual energy buffer capacity, the capacity of the energy buffer is considered to be infinite size in the theoretical analysis. In \cite{article16}, it is indicated that there is no energy loss during charging between SEB and PEB or discharging between PEB and the node transmitter.
The signals received from S, R1, and R2 are combined using MRC at D.

\begin{figure}[h]
	
	\centerline{\includegraphics[width=3.5in]{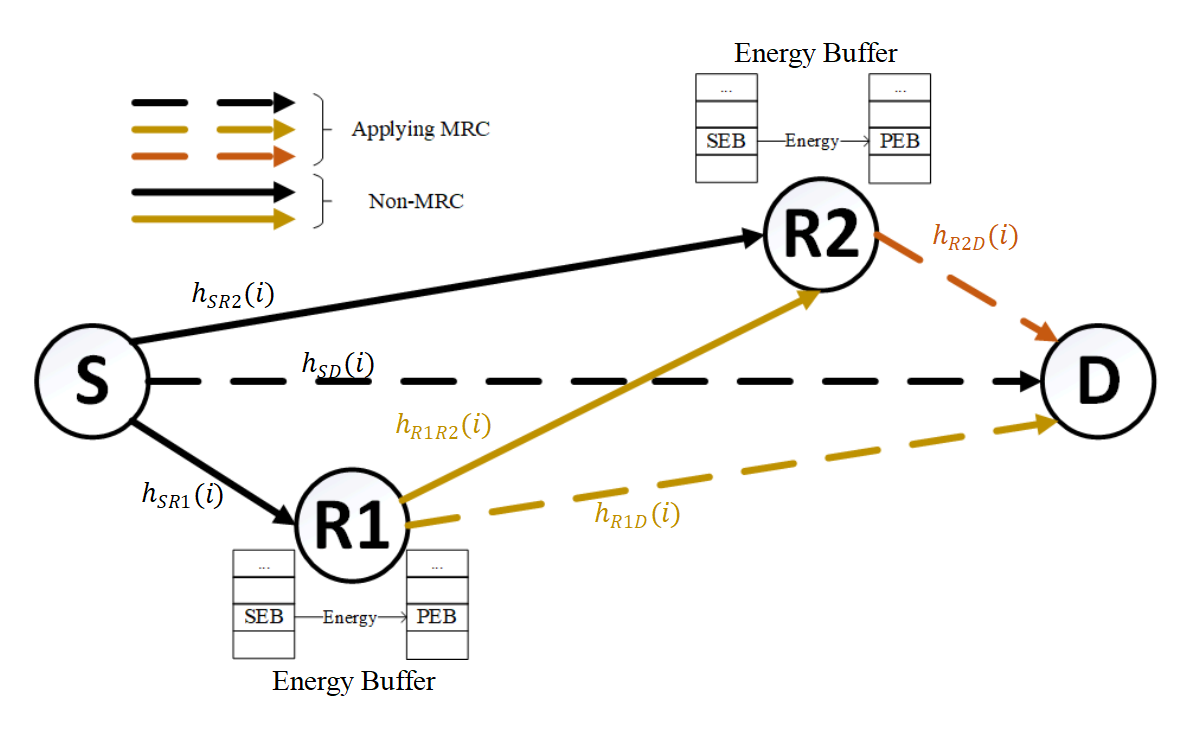}}
	\caption{System model\label{system}}
\end{figure}

\par The source S broadcasts unit-energy symbols $x_S(i)$ to relay R1, relay R2, and destination D at rate $R_0$ with the constant power $P_S$. The received signals $y_{SR1}(i)$, $y_{SR2}(i)$ and $y_{SD}(i)$ at relay R1, relay R2 and destination D in the $i$-th time slot are given by
\begin{equation}\label{eq1}
	y_{SR1}(i) = \sqrt{P_{S}}h_{SR1}(i)x_S(i) + n_{SR1}(i),
\end{equation}
\begin{equation}
	y_{SR2}(i) = \sqrt{P_{S}}h_{SR2}(i)x_S(i) + n_{SR2}(i),
\end{equation}
\begin{equation}
	y_{SD}(i) = \sqrt{P_{S}}h_{SD}(i)x_S(i) + n_{SD}(i),
\end{equation}
where, $n_{SR1}(i)$, $n_{SR2}(i)$ and $n_{SD}(i)\sim\mathcal{CN}(0, N_0)$ represent the received additive white Gaussian noise (AWGN) at R1, R2 and D, respectively. Thus, the received signal-to-noise ratios (SNRs) $\gamma_{SR1}(i)$, $\gamma_{SR2}(i)$ and $\gamma_{SD}(i)$ at R1, R2 and D in the $i$-th time slot would be given as follows
\begin{equation}
	\begin{split} 
		& \gamma_{SR1}(i) = \frac{P_S|h_{SR1}(i)|^2}{N_0}, \quad \gamma_{SR2}(i) = \frac{P_S|h_{SR2}(i)|^2}{N_0} \\  
		& \mbox{ and } \quad \gamma_{SD}(i) = \frac{P_S|h_{SD}(i)|^2}{N_0}.   
	\end{split}
\end{equation}

\par  As a DF relay, the relay R1 decodes the received signals, re-encodes them into unit-energy symbols $x_{R1}(i)$, and then broadcasts $x_{R1}(i)$ to relay R2 and destination D with the constant power $M_{R1}$. Similarly, the relay R2 obtains the unit-energy symbols $x_{R2}(i)$ in the same way as relay R1, and then transmits $x_{R2}(i)$ to the destination D with the constant power $M_{R2}$. Next, the received signals $y_{R1R2}(i)$, $y_{R1D}(i)$ and $y_{R2D}(i)$ at relay R2 and destination D in the $i$-th time slot can be represented by
\begin{equation}
	y_{R1R2}(i) = \sqrt{M_{R1}}h_{R1R2}(i)x_{R1}(i) + n_{R1R2}(i),
\end{equation}
\begin{equation}
	y_{R1D}(i) = \sqrt{M_{R1}}h_{R1D}(i)x_{R1}(i) + n_{R1D}(i),
\end{equation}
\begin{equation}
	y_{R2D}(i) = \sqrt{M_{R2}}h_{R2D}(i)x_{R2}(i) + n_{R2D}(i),
\end{equation}
where, $n_{R1R2}(i)$, $n_{R1D}(i)$ and $n_{R2D}(i)\sim\mathcal{CN}(0, N_0)$ denote the received AWGN at R2 and D, respectively. Similarly, the received SNRs $\gamma_{R1R2}(i)$, $\gamma_{R1D}(i)$ and $\gamma_{R2D}(i)$ at R2 and D can be expressed as follows
\begin{equation}
	\begin{split}
		&\gamma_{R1R2}(i) = \frac{M_{R1}|h_{R1R2}(i)|^2}{N_0}, \quad \gamma_{R1D}(i) = \frac{M_{R1}|h_{R1D}(i)|^2}{N_0} \\
		&\mbox{ and } \quad \gamma_{R2D}(i) = \frac{M_{R2}|h_{R2D}(i)|^2}{N_0}.
	\end{split}
\end{equation}
According to channel assumption, the PDF of SNR of channel SR1, SR2, SD, R1D, R1R2 and R2D can be expressed as 
$f_{Node1Node2}(x) = W_{Node1Node2} e^{-W_{Node1Node2} x}$.\\
For example, the PDF of SNR of channel SR1 can be expressed as
$f_{SR1}(x) = W_{SR1} e^{-W_{SR1} x}$, where, $W_{SD}=\frac{d_{SD}^{\alpha}N_0}{P_S}$. Similarly,  $W_{SR2}=\frac{d_{SR2}^{\alpha}N_0}{P_S}$, $W_{SR1}=\frac{d_{SR1}^{\alpha}N_0}{P_S}$, $W_{R1D}=\frac{d_{R1D}^{\alpha}N_0}{M_{R1}}$, $W_{R1R2}=\frac{d_{R1R2}^{\alpha}N_0}{M_{R1}}$,  $W_{R2D}=\frac{d_{R2D}^{\alpha}N_0}{M_{R2}}$.

\begin{figure}[h]
	\centerline{\includegraphics[width=3.5in]{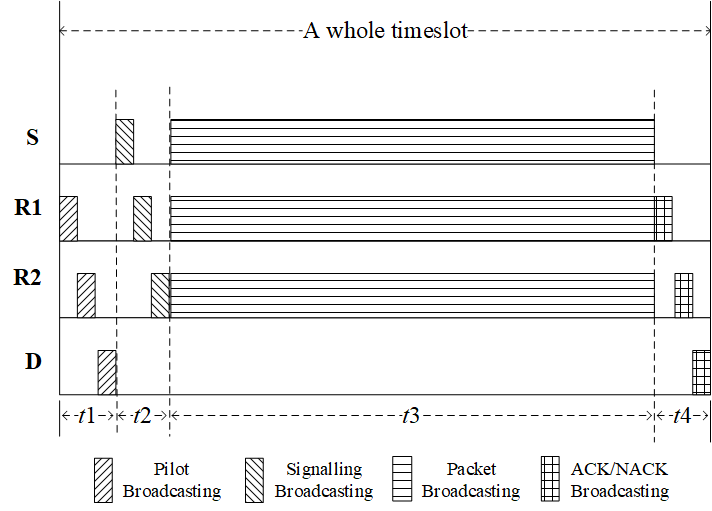}}
	\caption{Timeslot model\label{timeslot model}}
\end{figure}

Let 
\begin{equation}
	\Gamma_{th} = 2^{R_0}-1
\end{equation}
be the network SNR threshold, where $R_0$ represents the target data transmitting rate.

\par In our system, a whole timeslot is divided into four sub slots as shown in Fig. \ref{timeslot model}, where a whole time slot consists of pilot broadcasting sub-slot $t$1, signalling broadcasting sub-slot $t$2, packet broadcasting sub-slot $t$3 and positive acknowledgement (ACK) or negative acknowledgement (NACK) broadcasting sub-slot $t$4. In sub-slot $t$1, node D and relay nodes R1 and R2 need to broadcast the pilot signals sequentially, allowing each node to determine the CSI between itself and other nodes. Moreover, the pilot signal of D carries the overall SNR $\gamma_{overall}$ of the current received packet combined by the MRC technique so that S, R1, and R2 nodes can know the current SNR threshold $\Gamma_{th}-\gamma_{overall}$. About $\gamma_{overall}$ is descibed in OR protocol. In sub-slot $t$2, each node determines whether it is able to transmit the packet depending on its energy state, CSI and ownership of the packet. If the node can transmit the packet, then it broadcasts high-level signalling; otherwise, it broadcasts low-level signalling. And the order of S, R1, and R2 signalling broadcasting is shown in Fig. \ref{timeslot model}. In this way, each node can obtain the transmission state of the cooperative network. In sub-slot $t$3, with the transmission state, nodes determine whether transmit the packet or not based on the OR protocol described below. Then only one node can broadcast the packet in sub-slot $t$3. The blocks of sub-slot $t$3 shown in Fig. \ref{timeslot model} just means that S, R1, and R2 are all able to broadcast packet in sub-slot $t$3, but do not mean that they can broadcast packet at the same timeslot. In sub-slot $t$4, D, R1, and R2 send a positive acknowledgment or a negative acknowledgment to indicate their reception statuses. Only when the instantaneous link SNR at the receiving node is not less than the threshold $\Gamma_{th}$, ACK signals would be broadcasted by the receiving node as the sign of successful packet reception.

\subsection{OR Protocol}
In this part, the main steps implemented by the OR protocol are described. S, R1, and R2 are broadcast nodes. There can be more than one broadcast node as transmitter candidates (TC) at the same time which have the packet. TC set at the $i$-th timeslot is denoted by TCs$(i)$. The node which can receive the signals transmitted by the broadcast node is named the neighboring node of the broadcast node. Thus, in this system, neighboring nodes of S are R1, R2, and D, while neighboring nodes of R1 are R2 and D, and the neighboring node of R2 is D. Then, we assumed each broadcast node knows the perfect CSI of communication links between itself and each of its neighboring nodes by pilot broadcasting. Our cooperative network also assumes that S always has the packet to be transmitted. And if the signals with strong enough SNR, which is more than the current threshold, are transmitted to D, S will be set as the next broadcast node to transmit the new signals, and relay nodes clear data cache, which is indicated that one delivery is completed. Additionally, when transmitting the same packet to the same node, S has the highest transmission priority, R2 has the second priority, and R1 has the lowest priority. This is because S has fixed power support, which does not consider the energy state when transmitting packets. This way can save the harvested energy of relay nodes in order to be used in the case where only relay nodes can transmit the packet. Furthermore, the OR protocol consists of three steps:
\subsubsection{}
Determine the TCs $\textbf{S}(i) \in\{\textbf{s}_1, \textbf{s}_2, \textbf{s}_3, \textbf{s}_4,  \textbf{s}_5, \textbf{s}_6\}$ in the current time slot, where $\textbf{s}_1 = \{S\}$, $\textbf{s}_2 = \{S,R1\}$, $\textbf{s}_3 = \{S,R2\}$, $\textbf{s}_4 = \{(S,R1,R2)_1\}$, $\textbf{s}_5 = \{(S,R1,R2)_2\}$, $\textbf{s}_6 = \{(S,R1,R2)_3\}$. Especially, the term (S, R1, R2)$_1$ means the situation of the packet from R1 to R2, when both S and R1  have the packet. Term (S, R1, R2)$_2$ represents that the packet at R2 is from S, when both S and R1 have the packet. Term (S,R1,R2)$_3$ represents that S transmits the packet to R1 and R2 in the same slot.
\subsubsection{}
On the basis of the CSI between nodes, the stored energy $B_1(i)$ of $R1$ and the stored energy $B_2(i)$ of $R2$, the transmitter node $\in\{S, R1, R2, \approx\}$ is selected from the TCs $\textbf{S}(i)$, where term $\approx$ means that all nodes keep silent in the current time slot.
\subsubsection{}
Determine the effective transmission and get TCs $\textbf{S}(i+1) \in\{\textbf{s}_1, \textbf{s}_2, \textbf{s}_3, \textbf{s}_4,  \textbf{s}_5, \textbf{s}_6\}$ in the next time slot.

In details, all cases of OR protocol are shown in \ref{TableB}. Because of transmitting packets by broadcasting, D can receive all signals of the same packet to combine by MRC even in the different timeslot, no matter which the transmitter is. Therefore, we use $\gamma$ (like $\gamma_{SD}$) without $(i)$ to represent the SNR of the signal of the same packet received by D in the previous timeslot. In addition, all $\gamma(i)$ (like $\gamma_{SD}(i)$) represent the SNR of the corresponding link in the current timeslot. Assume that D received a signal with $\gamma_1$ SNR and a signal of the same packet with $\gamma_2$ SNR in the previous timeslot. After MRC technique processing, the overall SNR of the signal received by D is $\gamma_{overall}=\gamma_1+\gamma_2$. Hence, to meet the SNR threshold $\Gamma_{th}$, each node only needs to transmit the signal with $\Gamma_{th}-\gamma_{overall}$ SNR, which reduces the need for high-quality channels and improves the performance of the network. For example, when TCs$(i)$ = \{S, R1\}, the $\gamma_{overall}$ is $\gamma_{SD}$ which was received by D when S transmitted the signal to R1 in the previous timeslot by broadcasting way. Terms $\gamma_{overall1}$, $\gamma_{overall2}$ and $\gamma_{overall3}$ represent the SNR, which has been received by D, when TCs$(i)$ = \{S, R1\}, TCs$(i)$ = \{S, R2\} and TCs$(i)$ = \{S, R1, R2\} respectively. Obviously, $\gamma_{overall1}$, $\gamma_{overall2}$ and $\gamma_{overall3}$ are less than $\Gamma_{th}$.
\begin{table*}[h!]
	\caption{The OR Protocol}
In the current time slot, the protocol determines the transmitter node and effective transmission according to the known transmitter candidates set TCs(i) and the known conditions (energy information stored by relay nodes, the current channel information, and the overall SNR of the current packet which has been combined by MRC technique), and further obtains the TCs(i + 1) of the next time slot.
	\begin{tabular}{|c|c|c|c|c|}
		\hline
		$TCs(i)$ & transmitter & effective transmission & conditions & $TCs(i+1)$\\
		\hline
		\multirow{5}{*}{S}&S&S→D&$\gamma_{SD}(i)\ge\Gamma_{th}$&{S}\\
		\cline{2-5}
		&S&S→R1,R2&$\gamma_{SR1}(i)\ge\Gamma_{th}$,$\gamma_{SR2}(i)\ge\Gamma_{th},\gamma_{SD}(i)<\Gamma_{th}$&(S,R1,R2)$_3$ \\
		\cline{2-5}
		&S&S→R2&$\gamma_{SR2}(i)\ge\Gamma_{th},\gamma_{SR1}(i)<\Gamma_{th},\gamma_{SD}(i)<\Gamma_{th} $&S,R2 \\
		\cline{2-5}
		&S&S→R1&$\gamma_{SR1}(i)\ge\Gamma_{th},\gamma_{SR2}(i)<\Gamma_{th},\gamma_{SD}(i)<\Gamma_{th}$&S,R1 \\
		\cline{2-5}
		&$\approx$&$\approx$&others&S \\
		\hline
		\multirow{8}{*}{S,R1}&S&S→D&$\gamma_{overall1}+\gamma_{SD}(i)\ge\Gamma_{th}$&S\\
		\cline{2-5}
		&R1&R1→D&$\gamma_{overall1}+\gamma_{R1D}(i)\ge\Gamma_{th},B_1(i)\ge M_{R1},\gamma_{overall1}+\gamma_{SD}(i)<\Gamma_{th}$&S\\
		\cline{2-5}		
		&\multirow{2}{*}{S}&\multirow{2}{*}{S→R2}&$\gamma_{SR2}(i)\ge\Gamma_{th},\gamma_{overall1}+\gamma_{R1D}(i)<\Gamma_{th},B_1(i)\ge M_{R1},$&\multirow{2}{*}{(S,R1,R2)$_1$}\\
		&&&$\gamma_{overall1}+\gamma_{SD}(i)<\Gamma_{th}$&\\
		\cline{2-5}
		&S&S→R2&$\gamma_{SR2}(i)\ge\Gamma_{th},B_1(i)< M_{R1},\gamma_{overall1}+\gamma_{SD}(i)<\Gamma_{th}$&(S,R1,R2)$_1$\\
		\cline{2-5}
		&\multirow{2}{*}{R1}&\multirow{2}{*}{R1→R2}&$\gamma_{R1R2}(i)\ge\Gamma_{th},\gamma_{SR2}(i)<\Gamma_{th},B_1(i)\ge M_{R1},$&\multirow{2}{*}{(S,R1,R2)$_2$}\\
		&&&$\gamma_{overall1}+\gamma_{SD}(i)<\Gamma_{th}$&\\
		\cline{2-5}
		&$\approx$&$\approx$&others&S,R1\\
		\hline
		\multirow{3}{*}{S,R2}&S&S→D&$\gamma_{overall2}+\gamma_{SD}(i)\ge\Gamma_{th}$&S\\
		\cline{2-5}
		&R2&R2→D&$\gamma_{overall2}+\gamma_{R2D}(i)\ge\Gamma_{th},\gamma_{overall2}+\gamma_{SD}(i)<\Gamma_{th}$&S\\
		\cline{2-5}
		&$\approx$&$\approx$&others&S,R2\\
		\hline
		\multirow{6}{*}{(S,R1,R2)$_1$}&S&S→D&$\gamma_{overall3}+\gamma_{SD2}(i)\ge\Gamma_{th},B_2(i)\ge M_{R2}$&S\\
		\cline{2-5}
		&\multirow{2}{*}{R1}&\multirow{2}{*}{R1→D}&$\gamma_{overall3}+\gamma_{R1D}(i)\ge\Gamma_{th},\gamma_{overall3}+\gamma_{R2D}(i)<\Gamma_{th},$&\multirow{2}{*}{S}\\
		&&&$\gamma_{overall3}+\gamma_{SD}(i)<\Gamma_{th},B_1(i)\ge M_{R1}$&\\
		\cline{2-5}
		&\multirow{2}{*}{R2}&\multirow{2}{*}{R2→D}&$\gamma_{SD}+\gamma_{R1D}+\gamma_{R2D}(i)\ge\Gamma_{th},$&\multirow{2}{*}{S}\\
		&&&$\gamma_{overall3}+\gamma_{SD}(i)<\Gamma_{th},B_2(i)\ge M_{R2}$&\\
		\cline{2-5}
		&$\approx$&$\approx$&others&(S,R1,R2)$_1$\\
		\hline
		\multirow{6}{*}{(S,R1,R2)$_2$}&S&S→D&$\gamma_{overall3}+\gamma_{SD}(i)\ge\Gamma_{th}$&S\\
		\cline{2-5}
		&\multirow{2}{*}{R1}&\multirow{2}{*}{R1→D}&$\gamma_{overall3}+\gamma_{R1D}(i)\ge\Gamma_{th},\gamma_{overall3}+\gamma_{R2D}(i)\ge\Gamma_{th},$&\multirow{2}{*}{S}\\
		&&&$\gamma_{overall3}+\gamma_{SD}(i)<\Gamma_{th}$&\\
		\cline{2-5}
		&R2&R2→D&$\gamma_{overall3}+\gamma_{R2D}(i)\ge\Gamma_{th},\gamma_{overall3}+\gamma_{SD}(i)<\Gamma_{th}$&S\\
		\cline{2-5}
		&$\approx$&$\approx$&others&(S,R1,R2)$_2$\\
		\hline
		\multirow{4}{*}{(S,R1,R2)$_3$}&S&S→D&$\gamma_{overall3}+\gamma_{SD}(i)\ge\Gamma_{th}$&S\\
		\cline{2-5}
		&\multirow{2}{*}{R1}&\multirow{2}{*}{R1→D}&$\gamma_{overall3}+\gamma_{R1D}(i)\ge\Gamma_{th},\gamma_{overall3}+\gamma_{R2D}(i)<\Gamma_{th},$&\multirow{2}{*}{S}\\
		&&&$\gamma_{overall3}+\gamma_{SD}(i)<\Gamma_{th}$&\\
		\cline{2-5}
		&R2&R2→D&$\gamma_{overall3}+\gamma_{R2D}(i)\ge\Gamma_{th},\gamma_{overall3}+\gamma_{SD}(i)<\Gamma_{th}$&S\\
		\cline{2-5}
		&$\approx$&$\approx$&others&(S,R1,R2)$_3$\\
		\hline
	\end{tabular}\centering
	
	\label{TableB}
\end{table*}

\section{Limiting Distribution of Energy}
In this section, the expressions for the limiting distributions of energy in the PEB of R1 and R2, assuming it to be of infinite size, are derived and analyzed. The input harvest energy $X(i)$ in the  $i$-th time slot is assumed to be an exponentially distributed random variable with probability density function (PDF) $f_X(x)$. With the mean  $\mathbb{E}[X(i)]=1/\lambda$, the PDF of $X(i)$ can be expressed as 
\begin{equation}
	f_X(x) = \lambda e^{-\lambda x}, \quad x>0.
\end{equation}
Obviously, the energy harvesting parameter $\lambda$ of R1 and R2 are independent. Thus, let $\lambda_1$ represent the parameter of R1, and $\lambda_2$ represent the parameter of R2.

In order to clearly describe the theoretical derivation process, letters '$\mathfrak{c}$', '$\mathfrak{d}$', '$\mathfrak{f}$', '$\mathfrak{g}$', '$\mathfrak{m}$', '$\mathfrak{n}$' and '$\mathfrak{o}$' are used, whose upper cases represent conditions and lower cases represent corresponding probability values. For example,
\begin{equation}\label{c}
	\begin{split}
\mathbb{C} &\triangleq \gamma_{overall1} + \gamma_{SD2}(i)< \Gamma_{th},
	\end{split}
\end{equation}
\begin{equation}
	\begin{split}
		&\mathrm{Pr}\{\gamma_{overall1} + \gamma_{SD}(i) < \Gamma_{th} \}=\mathrm{Pr}\{\mathbb{C}\}=\mathfrak{c}\\
	\end{split}
\end{equation}
 Case-$\mathbb{C}$ means the D node receives a signal from the S node first, but the SNR of the signal is less than the threshold. Furthermore, after some slots, D receives the signal from S again, while after the MRC algorithm combines the signals, the sum of SNR is still less than the threshold. And, obviously, $\overline{\mathbb{C}}$ means the sum of SNR is greater than or equal to the threshold as well as $\overline{\mathbb{C}}=1-\mathfrak{c}$.
	\begin{table*}[h!]
		\caption{All failure transmission case only considering CSI in system table}
	\begin{tabular}{|c|c|c|c|c|}
	\hline
	$TCs(I)$ & transmitter & recevier & condition & alias\\
	\hline
	\multirow{4}{*}{S}&S&D&$\gamma_{SD}(i)<\Gamma_{th}$&$\mathbb{A}$\\
	\cline{2-5}
	&S&R1,R2&$\gamma_{SR1}(i)<\Gamma_{th}$,$\gamma_{SR2}(i)<\Gamma_{th}$&$\mathbb{B}$ \\
		\cline{2-5}
	&S&R2&$\gamma_{SR2}(i)<\Gamma_{th}$&$\mathbb{Q}$ \\
		\cline{2-5}
	&S&R1&$\gamma_{SR1}(i)<\Gamma_{th}$&$\mathbb{R}$ \\
	\hline
	\multirow{4}{*}{S,R1}&S&D&$\gamma_{overall1}+\gamma_{SD}(i)<\Gamma_{th}$&$\mathbb{C}$\\
	\cline{2-5}
		&R1&D&$\gamma_{overall1}+\gamma_{R1D}(i)<\Gamma_{th}$&$\mathbb{D}$\\
		\cline{2-5}		
		&S&R2&$\gamma_{SR2}<\Gamma_{th}$&$\mathbb{S}$\\
		\cline{2-5}
		&R1&R2&$\gamma_{R1R2}<\Gamma_{th}$&$\mathbb{T}$\\
	\hline
	\multirow{2}{*}{S,R2}&S&D&$\gamma_{overall2}+\gamma_{SD}(i)<\Gamma_{th}< \Gamma_{th}$&$\mathbb{M}$\\
	\cline{2-5}
		&R2&D&$\gamma_{overall2}+\gamma_{R2D}(i)<\Gamma_{th}$&$\mathbb{N}$\\
	\hline
	\multirow{3}{*}{(S,R1,R2)$_1$}&S&D&$\gamma_{overall3}+\gamma_{SD}(i)<\Gamma_{th}$&$\mathbb{O}$\\
	\cline{2-5}
	&R1&D&$\gamma_{overall3}+\gamma_{R1D}(i)<\Gamma_{th}$&$\mathbb{G}$\\
	\cline{2-5}
	&R2&D&$\gamma_{overall3}+\gamma_{R2D}(i)<\Gamma_{th}$&$\mathbb{F}$\\
		\hline
	\multirow{3}{*}{(S,R1,R2)$_2$}&S&D&$\gamma_{overall3}+\gamma_{SD}(i)<\Gamma_{th}$&$\mathbb{O}$\\
	\cline{2-5}
	&R1&D&$\gamma_{overall3}+\gamma_{R1D}(i)<\Gamma_{th}$&$\mathbb{G}$\\
	\cline{2-5}
	&R2&D&$\gamma_{overall3}+\gamma_{R2D}(i)<\Gamma_{th}$&$\mathbb{F}$\\
	\hline
	\multirow{3}{*}{(S,R1,R2)$_3$}&S&D&$\gamma_{overall3}+\gamma_{SD}(i)<\Gamma_{th}$&$\mathbb{O}$\\
	\cline{2-5}
	&R1&D&$\gamma_{overall3}+\gamma_{R1D}(i)<\Gamma_{th}$&$\mathbb{G}$\\
	\cline{2-5}
	&R2&D&$\gamma_{overall3}+\gamma_{R2D}(i)<\Gamma_{th}$&$\mathbb{F}$\\
	\hline
	\end{tabular}\centering

\label{TableA}
	\end{table*}
In TABLE \ref{TableA}, all alias are shown.

Then, we can similarly get
\begin{equation}
	\begin{split}
		\mathbb{D} &\triangleq \gamma_{overall1} + \gamma_{R1D}(i)< \Gamma_{th},
	\end{split}
\end{equation}
\begin{equation}
	\begin{split}
		\mathbb{F} &\triangleq \gamma_{overall3} + \gamma_{R2D}(i)< \Gamma_{th},
	\end{split}
\end{equation}
\begin{equation}
	\begin{split}
		\mathbb{G} &\triangleq \gamma_{overall3} + \gamma_{R1D2}(i) < \Gamma_{th},
	\end{split}
\end{equation}
\begin{equation}
	\begin{split}
		\mathbb{M} &\triangleq \gamma_{overall2} + \gamma_{R2D}(i) < \Gamma_{th},
	\end{split}
\end{equation}
\begin{equation}
	\begin{split}
		\mathbb{N} &\triangleq \gamma_{overall2} + \gamma_{R2D}(i) < \Gamma_{th},
	\end{split}
\end{equation}
\begin{equation}
	\begin{split}
		\mathbb{O} &\triangleq \gamma_{overall3} + \gamma_{SD2}(i) < \Gamma_{th},
	\end{split}
\end{equation}

To obtain the expressions for PDF of energy in the PEB of R1 and R2, the probability of the transmitter candidate state needs to be calculated in advance. In \cite{anwannian}, we know, by using the state transition matrix (STM) $\textbf{T}$ of two adjacent time slots, each transmitter candidate state occurrence probabilities, $p_s = \mathrm{Pr}\{TCs = \{S\}\}$, $p_{SR1} = \mathrm{Pr}\{TCs = \{S,R1\}\}$, $p_{SR2} = \mathrm{Pr}\{TCs = \{S,R2\}\}$, $p_1= \mathrm{Pr}\{P1\}$, $p_2=\mathrm{Pr}\{P2\}$ and $p_3=\mathrm{Pr}\{P3\}$, can be updated iteratively until they converge to get their thoerical value, where $P1$ represents (S,R1,R2)$_1$, $P2$ represents (S,R1,R2)$_2$, $P3$ represents (S,R1,R2)$_3$. Obviously, $p_1+p_2+p_3=p_{SR1R2}=\mathrm{Pr}\{TCs = \{S,R1,R2\}\}$. And the STM $\textbf{T}$ can be constructed as follows
	\begin{strip}
\begin{equation}
	\textbf{T}=\begin{bmatrix}
		p_{S-S}      & p_{S-SR1}  & p_{S-SR2}  & p_{S-(SR1R2)_1}  & p_{S-(SR1R2)_2}   & p_{S-(SR1R2)_3}  \\
		p_{SR1-S}      & p_{SR1-SR1}  & p_{SR1-SR2}  & p_{SR1-(SR1R2)_1}  & p_{SR1-(SR1R2)_2}   & p_{SR1-(SR1R2)_3}  \\
		p_{SR2-S}      & p_{SR2-SR1}  & p_{SR2-SR2}  & p_{SR2-(SR1R2)_1}  & p_{SR2-(SR1R2)_2}   & p_{SR2-(SR1R2)_3}  \\
		p_{(SR1R2)_1-S}      & p_{(SR1R2)_1-SR1}  & p_{(SR1R2)_1-SR2}  & p_{(SR1R2)_1-(SR1R2)_1}  & p_{(SR1R2)_1-(SR1R2)_2}   & p_{(SR1R2)_1-(SR1R2)_3}  \\
		p_{(SR1R2)_2-S}      & p_{(SR1R2)_2-SR1}  & p_{(SR1R2)_2-SR2}  & p_{(SR1R2)_2-(SR1R2)_1}  & p_{(SR1R2)_2-(SR1R2)_2}   & p_{(SR1R2)_2-(SR1R2)_3}  \\
		p_{(SR1R2)_3-S}      & p_{(SR1R2)_3-SR1}  & p_{(SR1R2)_3-SR2}  & p_{(SR1R2)_3-(SR1R2)_1}  & p_{(SR1R2)_3-(SR1R2)_2}   & p_{(SR1R2)_3-(SR1R2)_3}  \\
	\end{bmatrix},
\end{equation}
	\end{strip}
where the element 

\noindent$p_{i-j}(i,j\in\{S, SR1, SR2, (SR1R2)_1, (SR1R2)_2, (SR1R2)_3\})$ represents the probability that the transmitter candidate state changes from state $i$ at the current time slot to state $j$ at the next time slot, satisfying $\begin{matrix} \sum_{j\in\{S, SR1, SR2, (SR1R2)_1, (SR1R2)_2, (SR1R2)_3\}} p_{i-j} \end{matrix}=1$. The details of the algorithm for the STM-based theoretical solutions of the transmitter candidate state probability can be seen in \cite{anwannian}. According to the OR protocol introduced in section II, we have
\begin{equation}
	\begin{split}
		p_{S-S} & = \mathrm{Pr}\{\gamma_{SD}(i)<\Gamma_{th}, \gamma_{SR2}(i)<\Gamma_{th}, \gamma_{SR1}(i)<\Gamma_{th}\} \\
		& \quad + \mathrm{Pr}\{\gamma_{SD}(i) \ge \Gamma_{th}\} \\
		& = \left(1-e^{-W_{SD}\Gamma_{th}}\right)\left(1-e^{-W_{SR2}\Gamma_{th}}\right)\left(1-e^{-W_{SR1}\Gamma_{th}}\right) \\
		& \quad + e^{-W_{SD}\Gamma_{th}},
	\end{split}
\label{P_s}
\end{equation}
\begin{equation}
	\begin{split}
		p_{S-SR1} & = \mathrm{Pr}\{\gamma_{SD}(i)<\Gamma_{th}, \gamma_{SR2}(i)<\Gamma_{th}, \gamma_{SR1}(i)\ge \Gamma_{th}\} \\
		& = \left(1-e^{-W_{SD}\Gamma_{th}}\right)\left(1-e^{-W_{SR2}\Gamma_{th}}\right)e^{-W_{SR1}\Gamma_{th}},
	\end{split}
\end{equation}
\begin{equation}
	\begin{split}
		p_{S-SR2} & = \mathrm{Pr}\{\gamma_{SD}(i)<\Gamma_{th}, \gamma_{SR1}(i)<\Gamma_{th}, \gamma_{SR2}(i)\ge \Gamma_{th}\} \\
		& = \left(1-e^{-W_{SD}\Gamma_{th}}\right)\left(1-e^{-W_{R1D}\Gamma_{th}}\right)e^{-W_{SR2}\Gamma_{th}},
	\end{split}
\end{equation}
\begin{equation}
	\begin{split}
		p_{S-(SR1R2)_3} & = \mathrm{Pr}\{\gamma_{SD}(i)<\Gamma_{th}, \gamma_{SR1}(i)\ge\Gamma_{th},\\
		&\qquad \gamma_{SR2}(i)\ge \Gamma_{th}\} \\
		& = \left(1-e^{-W_{SD}\Gamma_{th}}\right)e^{-W_{R1D}\Gamma_{th}}e^{-W_{SR2}\Gamma_{th}},
	\end{split}
\end{equation}
\begin{equation}
	\begin{split}
		p_{SR1-S} & = \mathrm{Pr}\{\overline{\mathbb{C}} \} + \mathrm{Pr}\{B_1(i) \ge M_{R1}, \mathbb{C}, \overline{\mathbb{D}}\} \\
		&=(1-\mathfrak{c})+\mathfrak{c}*PU1*(1-\mathfrak{d}),
	\end{split}
\end{equation}
\begin{equation}
	\begin{split}
		p_{SR1-(SR1R2)_1} & = \mathrm{Pr}\{\mathbb{C}, B_1(i) \ge M_{R1}, \mathbb{D}, \gamma_{R1R2}(i) \ge \Gamma_{th}, \\
		&\quad \gamma_{SR2}(i) \ge \Gamma_{th}\} \\
		&=\mathfrak{c}*PU1*\mathfrak{d}*e^{-W_{R1R2}\Gamma_{th}}*(1-^{-W_{SR2}\Gamma_{th}}),
	\end{split}
\end{equation}
\begin{equation}
	\begin{split}
		p_{SR1-(SR1R2)_2} & = \mathrm{Pr}\{\mathbb{C}, B_1(i) \ge M_{R1}, \mathbb{D}, \gamma_{SR2}(i) \ge \Gamma_{th}\} \\
		&+ \mathrm{Pr}\{\mathbb{C}, B_1(i) < M_{R1}, \gamma_{R1R2}(i) \ge \Gamma_{th}\}\\
		&=\mathfrak{c}*PU1*\mathfrak{d}*e^{-W_{SR2}\Gamma_{th}}+\mathfrak{c}*(1-PU1)\\
		&\times e^{-W_{SR2}\Gamma_{th}},
	\end{split}
\end{equation}
\begin{equation}
	\begin{split}
		p_{SR1-SR1} & =\mathrm{Pr}\{B_1(i) < M_{R1},\mathbb{C}\}\\
		 &+ \mathrm{Pr}\{B_1(i) \ge M_{R1}, \mathbb{C}, \mathbb{D}, \gamma_{R1R2}(i) \ge \Gamma_{th},\\
		 &\gamma_{SR2}(i) \ge \Gamma_{th}\}+\mathrm{Pr}\{B_1(i) \ge M_{R1}, \mathbb{C}, \mathbb{D},\\
		 & \gamma_{R1R2}(i) < \Gamma_{th}, \gamma_{SR2}(i) < \Gamma_{th}\}\\
		&=1-p_{SR1-(SR1R2)_1}-p_{SR1-S}-p_{SR1-(SR1R2)_2},
	\end{split}
\end{equation}
\begin{equation}
	\begin{split}
		p_{SR2-S} & =\mathrm{Pr}\{\overline{\mathbb{M}}\} + \mathrm{Pr}\{B_2(i) \ge M_{R2}, \mathbb{M},\overline{\mathbb{N}}\} \\
		&=(1-\mathfrak{m})+PU2*\mathfrak{m}*(1-\mathfrak{n}),
	\end{split}
\end{equation}
\begin{equation}
	\begin{split}
		p_{SR2-SR2} & =\mathrm{Pr}\{B_2(i) \ge M_{R2}, \mathbb{M}, \mathbb{N}\} + \mathrm{Pr}\{B_2(i) < M_{R2}\} \\
		&=1-p_{SR2-S},
	\end{split}
\end{equation}
\begin{equation}
	\begin{split}
		p_{(SR1R2)_1-S} & =\mathrm{Pr}\{\overline{\mathbb{O}}\} + \mathrm{Pr}\{\mathbb{O},B_2(i) \ge M_{R2},\overline{\mathbb{F}}\}\\
		&+\mathrm{Pr}\{\mathbb{O},B_2(i) \ge M_{R2},\mathbb{F},B_1(i) \ge M_{R1}, \overline{\mathbb{G}}\}\\
		&+\mathrm{Pr}\{\mathbb{O},B_2(i) < M_{R2},B_1(i) \ge M_{R1}, \overline{\mathbb{G}}\}\\
		&=(1-\mathfrak{o})+\mathfrak{o}*(PU2*(1-\mathfrak{f})\\
		&+(PU2*\mathfrak{f}+(1-PU2))*PU1*(1-\mathfrak{g})),
	\end{split}
\end{equation}
\begin{equation}
	\begin{split}
		p_{(SR1R2)_1-(SR1R2)_1} & =1-p_{(SR1R2)_1-S},
	\end{split}
\end{equation}
\begin{equation}
	\begin{split}
		p_{(SR1R2)_2-S} & =\mathrm{Pr}\{\overline{\mathbb{O}}\} + \mathrm{Pr}\{\mathbb{O},B_2(i) \ge M_{R2},\overline{\mathbb{F}}\}\\
		&+\mathrm{Pr}\{\mathbb{O},B_2(i) \ge M_{R2},\mathbb{F},B_1(i) \ge M_{R1}, \overline{\mathbb{G}}\}\\
		&+\mathrm{Pr}\{\mathbb{O},B_2(i) < M_{R2},B_1(i) \ge M_{R1}, \overline{\mathbb{G}}\}\\
		&=(1-\mathfrak{o})+\mathfrak{o}*(PU2*(1-\mathfrak{f})\\
		&+(PU2*\mathfrak{f}+(1-PU2))*PU1*(1-\mathfrak{g})),
	\end{split}
\end{equation}
\begin{equation}
	\begin{split}
		p_{(SR1R2)_2-(SR1R2)_2} & =1-p_{(SR1R2)_2-S},
	\end{split}
\end{equation}

~

\begin{equation}
	\begin{split}
		p_{(SR1R2)_3-S} & =\mathrm{Pr}\{\overline{\mathbb{O}}\} + \mathrm{Pr}\{\mathbb{O},B_2(i) \ge M_{R2},\overline{\mathbb{F}}\}\\
		&+\mathrm{Pr}\{\mathbb{O},B_2(i) \ge M_{R2},\mathbb{F},B_1(i) \ge M_{R1}, \overline{\mathbb{G}}\}\\
		&+\mathrm{Pr}\{\mathbb{O},B_2(i) < M_{R2},B_1(i) \ge M_{R1}, \overline{\mathbb{G}}\}\\
		&=(1-\mathfrak{o})+\mathfrak{o}*(PU2*(1-\mathfrak{f})\\
		&+(PU2*\mathfrak{f}+(1-PU2))*PU1*(1-\mathfrak{g})),
	\end{split}
\end{equation}
\begin{equation}
	\begin{split}
		p_{(SR1R2)_3-(SR1R2)_3} & =1-p_{(SR1R2)_3-S},
	\end{split}
\label{p_{(SR1R2)_3-(SR1R2)_3}}
\end{equation}
where, $PU1=\frac{1}{b_1\lambda_1 M_{R1}} $ and $PU2=\frac{1}{b_2\lambda_2 M_{R2}}$. $b_1$ and $ b_2$ are given in (\ref{b_1}) and  (\ref{b_2}). And the proof of $PU1$ and $PU2$ are given in (\ref{SEC IV7}) and (\ref{SEC IV18}). And the remaining elements of \textbf{T} are equal to 0.
\par In order to get \textbf{T}, the values of $c$, $\mathfrak{d}$, $\mathfrak{f}$, $g$, $m$, $n$ and $o$ are required. Before this step, it is foundation to get the probability distributions $p_{\gamma_{overall1}}(j)$, $p_{\gamma_{overall2}}(j)$, $p_{\gamma_{overall3}}(j)$ of $\gamma_{overall1}$, $\gamma_{overall2}$ and $\gamma_{overall3}$. For example, $p_{\gamma_{overall1}}(j)$ is the probability that node D has received the SNR of value $(\frac{j-1}{\mathcal{N}}\Gamma_{th}\leq \gamma_{overall1}<\frac{j}{\mathcal{N}}\Gamma_{th})$ when TCs = $\{S,R1\}$. To obtain these probability distributions, the STM method is used. STM $\textbf{T1}$, STM $\textbf{T2}$ and STM $\textbf{T3}$ correspond to $p_{\gamma_{overall1}}$, $p_{\gamma_{overall2}}$ and $p_{\gamma_{overall3}}$ respectively. The size of $\textbf{T1}$ is  $\mathcal{N}$ $\times$ $\mathcal{N}$ dimension. The states of $\textbf{T1}$ are the value of $\gamma_{overall1}$ and the element $\textbf{T1}(i,j)$ means the transition probability from $(\frac{i-1}{\mathcal{N}}\Gamma_{th}\leq \gamma_{overall1}<\frac{i}{\mathcal{N}}\Gamma_{th})$ to $(\frac{j-1}{\mathcal{N}}\Gamma_{th}\leq \gamma_{overall1}<\frac{j}{\mathcal{N}}\Gamma_{th})$, where $1\leq i,j\leq \mathcal{N}$ and $i,j\in Z$. The elements of $\textbf{T1}$ are as follows:
\begin{equation}
	\begin{split}
		\textbf{T1}(i,j) &=(1-\int_{0}^{\frac{\mathcal{N}-i}{\mathcal{N}}\Gamma_{th}}f_{SD}(x)dx\\
		&\times(PU1\int_{0}^{\frac{\mathcal{N}-i}{\mathcal{N}}\Gamma_{th}}f_{R1D}(x)dx\\
		&\times \int_{0}^{\Gamma_{th}}f_{SR2}(x)dx\int_{0}^{\Gamma_{th}}f_{R1R2}(x)dx\\
		&+(1-PU1)\int_{0}^{\Gamma_{th}}f_{SR2}(x)dx))\\
		&\times \int_{\frac{j-1}{\mathcal{N}}\Gamma_{th}}^{\frac{j}{\mathcal{N}}\Gamma_{th}}f_{SD}(x)dx/\int_{0}^{\Gamma_{th}}f_{SD}(x)dx\\
		& \qquad \qquad \qquad \qquad \qquad \qquad \qquad \quad i \neq j
	\end{split}\label{T1}
\end{equation}
\begin{equation}
	\begin{split}
		\textbf{T1}(i,j) &=(1-\int_{0}^{\frac{\mathcal{N}-i}{\mathcal{N}}\Gamma_{th}}f_{SD}(x)dx\\
		&\times(PU1\int_{0}^{\frac{\mathcal{N}-i}{\mathcal{N}}\Gamma_{th}}f_{R1D}(x)dx\\
		&\times \int_{0}^{\Gamma_{th}}f_{SR2}(x)dx\int_{0}^{\Gamma_{th}}f_{R1R2}(x)dx\\
		&+(1-PU1)\int_{0}^{\Gamma_{th}}f_{SR2}(x)dx))\\
		&\times \int_{\frac{j-1}{\mathcal{N}}\Gamma_{th}}^{\frac{j}{\mathcal{N}}\Gamma_{th}}f_{SD}(x)dx/\int_{0}^{\Gamma_{th}}f_{SD}(x)dx\\
		&+ \int_{0}^{\frac{\mathcal{N}-i}{\mathcal{N}}\Gamma_{th}}f_{SD}(x)dx\\
		&\times(PU1\int_{0}^{\frac{\mathcal{N}-i}{\mathcal{N}}\Gamma_{th}}f_{R1D}(x)dx\\
		&\times \int_{0}^{\Gamma_{th}}f_{SR2}(x)dx\int_{0}^{\Gamma_{th}}f_{R1R2}(x)dx\\
		&+(1-PU1)\int_{0}^{\Gamma_{th}}f_{SR2}(x)dx)\\
		& \qquad \qquad \qquad \qquad \qquad \qquad \qquad \quad i = j
	\end{split}
\end{equation}
Similarly, the elements of $\textbf{T2}$ and $\textbf{T3}$ are as follows:
\begin{equation}
	\begin{split}
		\textbf{T2}(i,j) &=(1-\int_{0}^{\frac{\mathcal{N}-i}{\mathcal{N}}\Gamma_{th}}f_{SD}(x)dx\\
		&\times(PU2\int_{0}^{\frac{\mathcal{N}-i}{\mathcal{N}}\Gamma_{th}}f_{R2D}(x)dx+(1-PU2)))\\
		&\times \int_{\frac{j-1}{\mathcal{N}}\Gamma_{th}}^{\frac{j}{\mathcal{N}}\Gamma_{th}}f_{SD}(x)dx/\int_{0}^{\Gamma_{th}}f_{SD}(x)dx\\
		& \qquad \qquad \qquad \qquad \qquad \qquad \qquad \quad i \neq j
	\end{split}
\end{equation}
\begin{equation}
	\begin{split}
		\textbf{T2}(i,j) &=(1-\int_{0}^{\frac{\mathcal{N}-i}{\mathcal{N}}\Gamma_{th}}f_{SD}(x)dx\\
		&\times(PU2\int_{0}^{\frac{\mathcal{N}-i}{\mathcal{N}}\Gamma_{th}}f_{R2D}(x)dx+(1-PU2)))\\
		&\times \int_{\frac{j-1}{\mathcal{N}}\Gamma_{th}}^{\frac{j}{\mathcal{N}}\Gamma_{th}}f_{SD}(x)dx/\int_{0}^{\Gamma_{th}}f_{SD}(x)dx\\
		&+\int_{0}^{\frac{\mathcal{N}-i}{\mathcal{N}}\Gamma_{th}}f_{SD}(x)dx\\
		&\times(PU2\int_{0}^{\frac{\mathcal{N}-i}{\mathcal{N}}\Gamma_{th}}f_{R2D}(x)dx+(1-PU2))\\
		& \qquad \qquad \qquad \qquad \qquad \qquad \qquad \quad i = j
	\end{split}\label{T2}
\end{equation}

\begin{equation}
	\begin{split}
		\textbf{T3}(i,j) &=(1-\int_{0}^{\frac{\mathcal{N}-i}{\mathcal{N}}\Gamma_{th}}f_{SD}(x)dx\\
		&\times(PU2\int_{0}^{\frac{\mathcal{N}-i}{\mathcal{N}}\Gamma_{th}}f_{R2D}(x)dx+(1-PU2))\\
		&\times(PU1\int_{0}^{\frac{\mathcal{N}-i}{\mathcal{N}}\Gamma_{th}}f_{R1D}(x)dx+(1-PU1)))\\
		&\times (\frac{p_3}{p_1+p_2+p_3}\frac{\int_{\frac{j-1}{\mathcal{N}}\Gamma_{th}}^{\frac{j}{\mathcal{N}}\Gamma_{th}}f_{SD}(x)dx}{\int_{0}^{\Gamma_{th}}f_{SD}(x)dx}\\
		&+\frac{p_2}{p_1+p_2+p_3}\frac{p_{d_{S->R2}}(j)}{\sum_{k=1}^{\mathcal{N}}p_{d_{S->R2}}(k)}\\
		&+\frac{p_1}{p_1+p_2+p_3}\frac{p_{d_{R1->R2}}(j)}{\sum_{k=1}^{\mathcal{N}}p_{d_{R1->R2}}(k)})\\
		& \qquad \qquad \qquad \qquad \qquad \qquad \qquad \quad i \neq j
	\end{split}\label{T31}
\end{equation}
\begin{equation}
	\begin{split}
		\textbf{T3}(i,j) &=(1-\int_{0}^{\frac{\mathcal{N}-i}{\mathcal{N}}\Gamma_{th}}f_{SD}(x)dx\\
		&\times(PU2\int_{0}^{\frac{\mathcal{N}-i}{\mathcal{N}}\Gamma_{th}}f_{R2D}(x)dx+(1-PU2))\\
		&\times(PU1\int_{0}^{\frac{\mathcal{N}-i}{\mathcal{N}}\Gamma_{th}}f_{R1D}(x)dx+(1-PU1)))\\
		&\times (\frac{p_3}{p_1+p_2+p_3}\frac{\int_{\frac{j-1}{\mathcal{N}}\Gamma_{th}}^{\frac{j}{\mathcal{N}}\Gamma_{th}}f_{SD}(x)dx}{\int_{0}^{\Gamma_{th}}f_{SD}(x)dx}\\
		&+\frac{p_2}{p_1+p_2+p_3}\frac{p_{d_{S->R2}}(j)}{\sum_{k=1}^{\mathcal{N}}p_{d_{S->R2}}(k)}\\
		&+\frac{p_1}{p_1+p_2+p_3}\frac{p_{d_{R1->R2}}(j)}{\sum_{k=1}^{\mathcal{N}}p_{d_{R1->R2}}(k)})\\
		&+\int_{0}^{\frac{\mathcal{N}-i}{\mathcal{N}}\Gamma_{th}}f_{SD}(x)dx\\
		&\times(PU2\int_{0}^{\frac{\mathcal{N}-i}{\mathcal{N}}\Gamma_{th}}f_{R2D}(x)dx+(1-PU2))\\
		&\times(PU1\int_{0}^{\frac{\mathcal{N}-i}{\mathcal{N}}\Gamma_{th}}f_{R1D}(x)dx+(1-PU1)))\\
		& \qquad \qquad \qquad \qquad \qquad \qquad \qquad \quad i = j
	\end{split}
\end{equation}
where $p_{d_{S->R2}}(j)$ is the probability that node D receives the SNR of value $(\frac{j-1}{\mathcal{N}}\Gamma_{th}\leq \gamma_{overall1}+\gamma_{SD}<\frac{j}{\mathcal{N}}\Gamma_{th})$ when system has transitioned the TCs state from $\{S,R1\}$ to $\{S,R1,R2\}_1$, while $p_{d_{R1->R2}}(j)$ is the probability that node D receives the SNR of value $(\frac{j-1}{\mathcal{N}}\Gamma_{th}\leq \gamma_{overall1}+\gamma_{R1D}<\frac{j}{\mathcal{N}}\Gamma_{th})$ when system has transitioned the TCs state from $\{S,R1\}$ to $\{S,R1,R2\}_2$. Obviously, the expressions of $p_{d_{S->R2}}(j)$ and $p_{d_{R1->R2}}(j)$ are as follows:
\begin{equation}
	\begin{split}
		p_{d_{S->R2}}(j)=\mathrm{conv}(p_{\gamma_{overall1}},p_{SD})\\
	\end{split}
\end{equation}
\begin{equation}
	\begin{split}
		p_{d_{R1->R2}}(j)=\mathrm{conv}(p_{\gamma_{overall1}},p_{R1D})\\
	\end{split}
\end{equation}
where 
\begin{equation}
	\begin{split}
		p_{SD}(j)=\int_{\frac{j-1}{\mathcal{N}}\Gamma_{th}}^{\frac{j}{\mathcal{N}}\Gamma_{th}}f_{SD}(x)dx\\
	\end{split}
\end{equation}
\begin{equation}
	\begin{split}
		p_{R1D}(j)=\int_{\frac{j-1}{\mathcal{N}}\Gamma_{th}}^{\frac{j}{\mathcal{N}}\Gamma_{th}}f_{R1D}(x)dx.\\
	\end{split}\label{T32}
\end{equation}
Similarly,
\begin{equation}
	\begin{split}
		p_{R2D}(j)=\int_{\frac{j-1}{\mathcal{N}}\Gamma_{th}}^{\frac{j}{\mathcal{N}}\Gamma_{th}}f_{R2D}(x)dx.\\
	\end{split}\label{R2D}
\end{equation}

With these parameter, the expressions of $c$, $\mathfrak{d}$, $\mathfrak{f}$, $g$, $m$, $n$ and $o$ can be derived.
\begin{equation}
	\begin{split}
		\mathfrak{c}=\sum_{j=1}^{\mathcal{N}}\mathrm{conv}(p_{\gamma_{overall1}},p_{SD})
	\end{split}\label{c}
\end{equation}
\begin{equation}
	\begin{split}
		\mathfrak{d}=\sum_{j=1}^{\mathcal{N}}\mathrm{conv}(p_{\gamma_{overall1}},p_{R1D})
	\end{split}
\end{equation}
\begin{equation}
	\begin{split}
		\mathfrak{m}=\sum_{j=1}^{\mathcal{N}}\mathrm{conv}(p_{\gamma_{overall2}},p_{SD})
	\end{split}
\end{equation}
\begin{equation}
	\begin{split}
		\mathfrak{n}=\sum_{j=1}^{\mathcal{N}}\mathrm{conv}(p_{\gamma_{overall2}},p_{R2D})
	\end{split}
\end{equation}
\begin{equation}
	\begin{split}
		\mathfrak{f}=\sum_{j=1}^{\mathcal{N}}\mathrm{conv}(p_{\gamma_{overall3}},p_{R2D})
	\end{split}
\end{equation}
\begin{equation}
	\begin{split}
		\mathfrak{g}=\sum_{j=1}^{\mathcal{N}}\mathrm{conv}(p_{\gamma_{overall3}},p_{R1D})
	\end{split}
\end{equation}
\begin{equation}
	\begin{split}
		\mathfrak{o}=\sum_{j=1}^{\mathcal{N}}\mathrm{conv}(p_{\gamma_{overall3}},p_{SD})
	\end{split}\label{o}
\end{equation}
The detailed process for obtaining probability $p_s $, $p_{SR1}$, $p_{SR2}$, $p_1$, $p_2$, $p_3$, $p_{\gamma_{overall1}}(j)$, $p_{\gamma_{overall2}}(j)$, $p_{\gamma_{overall3}}(j)$, '$\mathfrak{c}$', '$\mathfrak{d}$', '$\mathfrak{f}$', '$\mathfrak{g}$', '$\mathfrak{m}$', '$\mathfrak{n}$' and '$\mathfrak{o}$' is shown in Alg. \ref{Alg.1}. See Appendix A. 
\par As the preparations are ready, in sections A and B, the PDF of energy in the PEB of R1 and R2 are obtained, respectively.
\subsection{Limiting Distribution of R2 Energy}
By using the DCSMC and donating the energy level in the infinite-size energy buffer in the $i$-th signaling interval as $B_2(i)$, the PEB's buffer of R2 update equations are given by
\begin{equation}\label{eq24}
	\begin{split}
		& B_2(i+1) = B_2(i)+X(i), \quad \mathbb{P}_{21}\\
		& B_2(i+1) = B_2(i)-M_{R2}+X(i), \quad \mathbb{P}_{22}
	\end{split}
\end{equation}
where $\mathbb{P}_{21}$ and $\mathbb{P}_{22}$ are the applicable conditions of the update equations, respectively.
They are given by
\begin{equation}\label{eq25}
	\begin{split}
		\mathbb{P}_{21}:\, & \big( TCs = \{S\} \big) \cup \big(TCs = \{S,R1\} \big) \\
		& \cup \Big[\big(TCs = \{S,R2\} \big) \cap \big(B_2(i) < M_{R2}\big)\Big]\\
		& \cup \Big[\big(TCs = \{S,R2\} \big) \cap \big(B_2(i) \ge M_{R2}\big) \cap \mathbb{M} \cap \mathbb{N} \Big]\\
		& \cup \Big[\big(TCs = \{S,R2\} \big) \cap \big(B_2(i) \ge M_{R2}\big) \cap \overline{\mathbb{M}}  \Big]\\
		&\cup \Big[\big(TCs = \{S,R1,R2\}\big) \cap \big(B_2(i) \ge M_{R2} \cap \mathbb{O} \cap \mathbb{F} \big)\Big]\\
		&\cup \Big[\big(TCs = \{S,R1,R2\}\big) \cap \big(B_2(i) \ge M_{R2} \cap \overline{\mathbb{O}}  \big)\Big]\\
		&\cup \Big[\big(TCs = \{S,R1,R2\}\big) \cap \big(B_2(i) < M_{R2} \big)\Big]
		,
	\end{split}
\end{equation}
\begin{equation}\label{eq26}
	\begin{split}
		\mathbb{P}_{22}:\, &\Big[\big(TCs = \{S,R2\} \big) \cap \big(B_2(i) \ge M_{R2}\big)\cap \mathbb{C} \cap \overline{\mathbb{N}} \Big]\\
		&\cup \Big[\big(TCs = \{S,R1,R2\}\big) \cap \big(B_2(i) \ge M_{R2} \cap \mathbb{O} \cap \overline{\mathbb{F}} \big)\Big].
	\end{split}
\end{equation}
Let $\psi_{R2}$ represents the energy buffer convergence parameter for the energy update process $B_1(i)$ of Eq. (\ref{eq24}), with the help of the Eq. (\ref{c}) to  Eq. (\ref{o})  , which can be expressed as follows
\begin{equation}\label{eq30}
	\begin{split}
		\psi_{R2}  & = \lambda_2 M_{R2} \Big[p_{SR2}*\mathfrak{m}(1-\mathfrak{n})+(p_1+p_2+p_3)*\mathfrak{o}(1-\mathfrak{f})\Big]\\
		&= \lambda_2 M_{R2}b_2,
	\end{split}
\end{equation}
	where 
\begin{equation}\label{b_2}
	\begin{split}
		b_2  & =p_{SR2}*\mathfrak{m}(1-\mathfrak{n})+(p_1+p_2+p_3)*o(1-\mathfrak{f}).
	\end{split}
\end{equation}
The limiting PDF when $\psi_{R2} > 1$ is obtained in Theorem 1, and the scenario when $\psi_{R2} \le 1$ is discussed in Theorem 2.
\begin{theorem}\label{theo3}
	If $\psi_{R2} > 1$, the energy update process $B_2(i)$ in Eq. (\ref{eq24}) will have a stationary distribution. Furthermore, the limiting PDF of the energy buffer state at R2 can be expressed by
	\begin{equation}\label{eq31}
		g_2(x) = \begin{cases}
			\dfrac{1}{M_{R2}}\left(1-e^{Q_2 x}\right), & 0 \leq x < M_{R2} \\
			\dfrac{1}{M_{R2}}\dfrac{-Q_2 e^{Q_2 x}}{\left(b_2\lambda_2+Q_2\right)}, & x \ge M_{R2}.
		\end{cases}
	\end{equation}
Where
	\begin{equation}\label{Q_2}
		\begin{split}
			Q_2 = \frac{-W\left(-b_2\lambda_2 M_{R2} e^{-b_2\lambda_2 M_{R2}}\right)}{M_{R2}}-b_2\lambda_2 < 0,
		\end{split}
	\end{equation}
	satisfying $b_2 \lambda_2 e^{Q_2 M_{R2}} = b_2 \lambda_2 + Q_2$.
\end{theorem}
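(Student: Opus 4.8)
The plan is to treat $B_2(i)$ as a discrete-time continuous-state Markov chain on $[0,\infty)$ and to obtain its stationary density by a guess-and-verify argument backed by a recurrence check. First I would write down the one-step transition kernel. Since the transmitter-candidate process and all channel gains are independent of the stored energy, conditioning on $B_2(i)=x$ collapses the update rule (\ref{eq24})--(\ref{eq26}) to: if $x<M_{R2}$, then $B_2(i+1)=x+X(i)$; if $x\ge M_{R2}$, then $B_2(i+1)=x+X(i)$ with probability $1-b_2$ and $B_2(i+1)=x-M_{R2}+X(i)$ with probability $b_2$, where $b_2$ is the quantity in (\ref{b_2}), i.e.\ the conditional probability (given $B_2(i)\ge M_{R2}$) of the two $\mathbb{P}_{22}$ branches obtained by the law of total probability over the TC-state. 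Plugging $f_X(t)=\lambda_2 e^{-\lambda_2 t}$ into the stationarity equation $g_2(y)=\int g_2(x)\,k(x,y)\,dx$ then gives an integral equation for the candidate density $g_2$, which must be split at $y=M_{R2}$ because the discharge term $\lambda_2 e^{-\lambda_2(y-x+M_{R2})}$ is active on $x\in[M_{R2},y+M_{R2})$ and therefore feeds back into the region $y<M_{R2}$.

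Next I would differentiate this integral equation in $y$ on each strip. Because $f_X$ is exponential, every integral telescopes and one is left with the relations $g_2'(y)=b_2\lambda_2\,g_2(y+M_{R2})$ for $0\le y<M_{R2}$ and $g_2'(y)=b_2\lambda_2\big(g_2(y+M_{R2})-g_2(y)\big)$ for $y\ge M_{R2}$. Inserting the ansatz $g_2(x)=Ce^{Qx}$ on $[M_{R2},\infty)$ into the second relation forces the characteristic equation $b_2\lambda_2 e^{QM_{R2}}=b_2\lambda_2+Q$, which is exactly the transcendental condition in the statement. Comparing the convex function $Q\mapsto b_2\lambda_2 e^{QM_{R2}}$ with the line $Q\mapsto b_2\lambda_2+Q$ shows that it has precisely two real roots, $Q=0$ and some $Q=Q_2<0$, the latter existing exactly when the tangent slope $b_2\lambda_2 M_{R2}=\psi_{R2}$ exceeds $1$; the root $Q=0$ yields a constant, non-integrable tail and must be discarded. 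Solving $b_2\lambda_2 e^{QM_{R2}}=b_2\lambda_2+Q$ via the substitution $t=-(Q+b_2\lambda_2)M_{R2}$ turns it into $t e^{t}=-b_2\lambda_2 M_{R2}e^{-b_2\lambda_2 M_{R2}}$, so $Q_2$ is given by (\ref{Q_2}); one must take the principal branch $W_0$, since the argument $-\psi_{R2}e^{-\psi_{R2}}$ lies in $(-1/e,0)$ when $\psi_{R2}>1$, the branch $W_{-1}$ returns the spurious root $Q=0$, and $W_0$ returns a value in $(-1,0)$ which, together with $b_2\lambda_2 M_{R2}=\psi_{R2}>1$, makes $Q_2<0$.

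I would then propagate the solution back to $[0,M_{R2})$: substituting $g_2(y+M_{R2})=Ce^{Q_2(y+M_{R2})}$ into $g_2'(y)=b_2\lambda_2\,g_2(y+M_{R2})$ and integrating gives $g_2(y)=A\,e^{Q_2 y}+D$ with $A=C(b_2\lambda_2+Q_2)/Q_2$ (using the characteristic equation), and imposing continuity of $g_2$ at $M_{R2}$ fixes $D=-A$, i.e.\ $g_2(y)=A(e^{Q_2 y}-1)$ on $[0,M_{R2})$. Finally, the normalization $\int_0^\infty g_2=1$, evaluated with the identity $e^{Q_2 M_{R2}}-1=Q_2/(b_2\lambda_2)$ (equivalent to the characteristic equation, which also makes the tail mass equal $-A/(b_2\lambda_2)$), collapses to $-A\,M_{R2}=1$, hence $A=-1/M_{R2}$ and $C=-Q_2/\big(M_{R2}(b_2\lambda_2+Q_2)\big)$; this is precisely (\ref{eq31}), and $Q_2<0$ guarantees $1-e^{Q_2 x}\ge 0$ and $g_2\ge 0$. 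To close the argument I would establish existence and uniqueness of the stationary law separately: a Foster--Lyapunov drift test with $V(x)=x$ gives mean one-step drift $\mathbb{E}[X(i)]-b_2 M_{R2}=1/\lambda_2-b_2 M_{R2}$ outside $[0,M_{R2})$, which is negative exactly under $\psi_{R2}>1$, so the chain (which is $\phi$-irreducible and aperiodic because the exponential $X(i)$ spreads mass over $[x,\infty)$) is positive Harris recurrent with a unique invariant density; since $g_2$ as constructed solves the stationarity equation and integrates to one, it is that density and $B_2(i)$ converges to it.

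The step I expect to be the main obstacle is justifying that the tail of $g_2$ is a \emph{single} exponential: the differentiated equation is a forward-delay equation, so a priori one should rule out other integrable solution branches. The clean way to avoid this, which I would adopt in the write-up, is not to claim uniqueness of the functional form at all --- establish positive recurrence (hence a unique stationary density) independently via the drift argument, and then verify by direct substitution that the closed form (\ref{eq31}) satisfies the stationarity integral equation; uniqueness does the rest. The remaining delicate points --- bookkeeping the case split of the integral equation near $y=M_{R2}$ where the discharge kernel wraps around, and selecting the correct Lambert-W branch --- are routine once set up carefully.
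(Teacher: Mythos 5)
Your proposal is correct and reaches exactly the paper's closed form: the same split at $x=M_{R2}$, the same characteristic equation $b_2\lambda_2 e^{Q_2M_{R2}}=b_2\lambda_2+Q_2$ solved by Lambert~W, and the same normalization constant $k_2=-Q_2/\bigl(M_{R2}(b_2\lambda_2+Q_2)\bigr)$; your differentiated relations $g_2'(y)=b_2\lambda_2 g_2(y+M_{R2})$ on $[0,M_{R2})$ and $g_2'(y)=b_2\lambda_2\bigl(g_2(y+M_{R2})-g_2(y)\bigr)$ on $[M_{R2},\infty)$ do follow from the stationarity equation, and your boundary condition ($g_2(0)=0$, equivalently continuity at $M_{R2}$) is indeed forced by the kernel. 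The differences from the paper's Appendix~B are in the supporting machinery rather than the result. The paper works at the CDF level, obtains a Volterra integral equation of the second kind for the sub-threshold region and solves it by the standard resolvent formula, takes the existence and uniqueness of the stationary distribution from the cited prior work on the harvest--store--use buffer model, and then closes the argument by verifying the second matching condition (its Eq.~(\ref{APPENDIX C 12b})) for internal consistency; it does not discuss the Lambert-W branch explicitly. You instead integrate a first-order forward-delay relation directly (simpler than the Volterra step), make the branch selection and the sign of $Q_2$ explicit, and --- the most substantive difference --- replace the citation-based uniqueness with a self-contained Foster--Lyapunov/Harris-recurrence argument whose drift condition $1/\lambda_2<b_2M_{R2}$ is precisely $\psi_{R2}>1$, after which a verify-by-substitution step suffices and no separate consistency check like (\ref{APPENDIX C 12b}) is needed. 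That buys a cleaner logical structure (you never have to argue that the exponential tail is the only admissible form), at the cost of having to fill in the $\phi$-irreducibility/aperiodicity details of the drift argument; note also that, like the paper, you are implicitly using the decoupling assumption that the transmitter-candidate state and channel events are i.i.d.\ across slots with marginals $p_{SR2},p_1+p_2+p_3,\mathfrak{m},\mathfrak{n},\mathfrak{o},\mathfrak{f}$ and independent of the buffer level, so on that point you are no less (and no more) rigorous than the paper.
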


\begin{proof}
	The proof is given in Appendix B.
\end{proof}
\begin{theorem}\label{theo4}
	When $\psi_{R2} \leq 1$, there is no stationary PDF of the energy buffer state of R1. In addition, after a finite number of time slots, the R1 energy buffer will keep $B_2(i) > M_{R2}$.
\end{theorem}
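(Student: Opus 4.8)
The statement contains two parts — nonexistence of a stationary density for $B_2(i)$, and the eventual confinement $B_2(i)>M_{R2}$ — and I would prove them separately: the first by pushing the transform-domain analysis of Appendix~B one step further, the second by a sample-path drift argument on the recursion (\ref{eq24}).

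For the nonexistence, the plan is to reuse the functional equation that any limiting density $g_2$ of $B_2(i)$ must satisfy. Exactly as in the proof of Theorem~\ref{theo3}, imposing stationarity on (\ref{eq24})--(\ref{eq26}) and passing to Laplace transforms forces any stationary density to have the piecewise-exponential shape (\ref{eq31}) with exponent $Q_2$ a real root of $b_2\lambda_2 e^{QM_{R2}}=b_2\lambda_2+Q$. I would then examine $f(Q)=b_2\lambda_2 e^{QM_{R2}}-b_2\lambda_2-Q$: one has $f(0)=0$, $f''(Q)=b_2\lambda_2 M_{R2}^2 e^{QM_{R2}}>0$, and $f'(0)=b_2\lambda_2 M_{R2}-1=\psi_{R2}-1$. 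Thus, when $\psi_{R2}\le 1$, $f$ is strictly convex with non-positive slope at the origin, so $Q=0$ is its only non-positive root (for $\psi_{R2}<1$ the remaining root is strictly positive). But normalizability of $g_2$ on $[M_{R2},\infty)$ forces $Q_2<0$, while non-negativity of $g_2$ on $[0,M_{R2})$ forces $Q_2\le 0$, and the root $Q_2=0$ makes (\ref{eq31}) vanish identically; no admissible root exists. Hence the functional equation has no normalizable non-negative solution, i.e.\ $B_2(i)$ has no stationary PDF when $\psi_{R2}\le 1$.

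For the confinement, I would argue on sample paths using (\ref{eq24})--(\ref{eq26}). Whenever $B_2(i)<M_{R2}$, only condition $\mathbb{P}_{21}$ can hold, so $B_2(i+1)=B_2(i)+X(i)>B_2(i)$: below the threshold the buffer increases strictly every slot, and since the $X(i)$ are i.i.d.\ exponential their partial sums diverge, so $B_2$ exceeds $M_{R2}$ after an almost surely finite number of slots. Once $B_2(i)\ge M_{R2}$, the recursion can be written uniformly as $B_2(i+1)=B_2(i)+X(i)-M_{R2}D_i$ with $D_i\in\{0,1\}$ the discharge indicator, and, using the drift computation behind (\ref{eq30}), the conditional mean increment in this regime is $\mathbb{E}[X(i)]-M_{R2}\Pr\{D_i=1\}=\tfrac{1}{\lambda_2}(1-\psi_{R2})\ge 0$. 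For $\psi_{R2}<1$ this drift is strictly positive and bounded away from zero, so along the excursions above $M_{R2}$ the strong law gives $B_2(i)\to\infty$; hence the process drops below $M_{R2}$ only finitely often, and combined with the monotone-climb property below the threshold this yields $B_2(i)>M_{R2}$ for all $i$ past some almost surely finite time.

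I expect the delicate point to be making this drift/transience step rigorous, and within it the boundary case $\psi_{R2}=1$. The one-step drift above $M_{R2}$ depends on the current transmitter-candidate state, whose law is itself coupled to $B_1(i)$ and $B_2(i)$ through $PU1$ and $PU2$, so the increments are not i.i.d.\ and a textbook random-walk theorem does not apply verbatim; I would instead work with the joint Markov chain $(\mathrm{TCs}(i),B_1(i),B_2(i))$ and run a Foster--Lyapunov/Dynkin argument with test function $B_2$, which simultaneously re-establishes transience (an independent route to the first part) and, via the regenerative excursions below $M_{R2}$, bounds the return times. For $\psi_{R2}=1$ the net drift above $M_{R2}$ is zero, so the direct argument only gives $\limsup_i B_2(i)=\infty$; here I would either exploit the slight upward push the reflecting dynamics at $M_{R2}$ contribute, or, absent a clean estimate, present the confinement half under the strict inequality $\psi_{R2}<1$ while keeping the nonexistence half for the full range $\psi_{R2}\le 1$.
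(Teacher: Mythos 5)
Your second half is essentially the paper's own argument, while your first half is a genuinely different route, and both halves as written leave gaps. The paper proves Theorem~2 purely by a sample-path rate-balance argument: it introduces the transmit-opportunity indicator $O_R(i)$ and the threshold indicator $\theta(i)$, rewrites Eq.~(\ref{eq24}) as $B_2(i+1)-B_2(i)=X(i)-M_{R2}\theta(i)O_R(i)$, bounds the long-run consumption rate by $M_{R2}\,\mathbb{E}[O_R(i)]=M_{R2}b_2$, and compares it with the harvesting rate $1/\lambda_2$; for $\psi_{R2}\le 1$ the harvested energy is not smaller than the consumed energy, so the buffer content grows without bound (hence no stationary PDF and eventual confinement above $M_{R2}$), with the equality case $\psi_{R2}=1$ handled by invoking Loynes' instability result \cite{29}. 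Your drift computation $\mathbb{E}[X]-M_{R2}\Pr\{D_i=1\}=\tfrac{1}{\lambda_2}(1-\psi_{R2})$ is exactly this comparison, and your proposal to formalize it via a Foster--Lyapunov argument on the joint chain $(\mathrm{TCs}(i),B_1(i),B_2(i))$ is, if anything, more careful than the paper about the non-i.i.d.\ increments.

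The two genuine gaps are these. First, your nonexistence argument only rules out stationary densities of the piecewise-exponential form (\ref{eq31}): the assertion that stationarity ``forces'' this shape is not established anywhere --- in Appendix~B the form $g_{22}(x)=k_2e^{Q_2x}$ is an ansatz justified by a uniqueness claim from \cite{article3} that presupposes existence, so showing that no admissible root $Q_2<0$ of $b_2\lambda_2e^{QM_{R2}}=b_2\lambda_2+Q$ exists does not by itself exclude a non-exponential stationary law; you would need to prove that every normalizable solution of the stationarity integral equation has this form, whereas the paper's drift argument excludes \emph{any} stationary distribution at once. Second, the theorem claims eventual confinement $B_2(i)>M_{R2}$ for the whole range $\psi_{R2}\le 1$, and you concede that your direct argument fails at $\psi_{R2}=1$, offering as a fallback to prove the confinement only for $\psi_{R2}<1$ --- that proves strictly less than the statement; the paper covers the boundary case by appealing to Loynes' theorem on queues whose arrival and service rates coincide (admittedly with a softer conclusion, ``the buffer may almost always provide $M_{R2}$''), so to match the claimed result you must either carry out your suggested analysis of the reflecting push at the threshold or import an instability result of the Loynes type rather than drop the case.
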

\begin{proof}
	The proof is given in Appendix C.
\end{proof}
\subsection{Limiting Distribution of R1 Energy}
	Similar to the solution of the limiting distribution of R2 energy, by using the DCSMC and donating the energy level in the infinite-size energy buffer in the $i$-th signaling interval as $B_1(i)$, the PEB's buffer of R1 update equations are given by
\begin{equation}\label{eq10}
	\begin{split}
		& B_1(i+1) = B_1(i)+X(i), \quad \mathbb{P}_{11}\\
		& B_1(i+1) = B_1(i)-M_{R1}+X(i), \quad \mathbb{P}_{12}
	\end{split}
\end{equation}
where $\mathbb{P}_{11}$ and $\mathbb{P}_{12}$ are the update conditions needed for the equations in Eq. (\ref{eq10}) according to OR protocol. Correspondingly, they can be expressed as follows
\begin{equation}\label{eq12}
	\begin{split}
		\mathbb{P}_{12}:\, &\Big[\big(TCs = \{S,R1\} \big) \cap \big(B_1(i) \ge M_{R1}\big)\cap \mathbb{C} \cap \overline{\mathbb{D}} \Big]\\
		&\cup \Big[\big(TCs = \{S,R1\} \big) \cap \big(\big(B_1(i) \ge M_{R1}\big), \mathbb{C} , \mathbb{D},\\
		&\qquad  \gamma_{SR2}(i)<\Gamma_{th},\gamma_{R1R2}(i)\ge\Gamma_{th}\big) \Big]\\
		&\cup \Big[\big(TCs = \{S,R1,R2\}\big) \cap \big(B_2(i) \ge M_{R2} , \mathbb{O} , \mathbb{F} ,\\
		&\big(B_1(i) \ge M_{R1}\big),\overline{\mathbb{G}}\big)\Big]\\
		&\cup \Big[\big(TCs = \{S,R1,R2\}\big) \cap \big(B_2(i) < M_{R2} , \mathbb{O},\\
		&\big(B_1(i) \ge M_{R1}\big),\overline{\mathbb{G}}\big)\Big],
	\end{split}
\end{equation}
\begin{equation}\label{eq11}
	\begin{split}
		\mathbb{P}_{11}:\, & \big(  TCs=\{S\} \big) \cup \big( TCs=\{S,R2\} \big)\\
		& \cup \Big[\big(TCs = \{S,R1\} \big) \cap \big(B_1(i) < M_{R1}\big)\Big]\\
		& \cup \Big[\big(TCs = \{S,R1\} \big) \cap \big(B_1(i) \ge M_{R1},\mathbb{C},\mathbb{D},\\
		&\qquad \gamma_{SR2}(i)\ge\Gamma_{th},\big)\Big]\\
		& \cup \Big[\big(TCs = \{S,R1\} \big) \cap \big(B_1(i) \ge M_{R1},\mathbb{C},\mathbb{D},\\
		&\qquad \gamma_{SR2}(i)<\Gamma_{th},\gamma_{R1R2}(i)<\Gamma_{th}\big)\Big]\\
		& \cup \Big[\big(P1 \big) \cap \big(B_2(i) \ge M_{R2},\mathbb{O}, \overline{\mathbb{F}} \big)\Big]\\
		& \cup \Big[\big(TCs = \{S,R1,R2\} \big) \cap \big(B_2(i) \ge M_{R2},\mathbb{O}, \mathbb{F},\\
		&B_1(i) \ge M_{R1},\mathbb{G} \big)\Big]\\
		& \cup \Big[\big(TCs = \{S,R1,R2\} \big) \cap \big(B_2(i) \ge M_{R2},\mathbb{O}, \mathbb{F},\\
		&B_1(i) < M_{R1} \big)\Big]\\
		& \cup \Big[\big(TCs = \{S,R1,R2\} \big) \cap \big(B_2(i) < M_{R2},\mathbb{O},\\
		&B_1(i) \ge M_{R1},\mathbb{G} \big)\Big]\\
		& \cup \Big[\big(TCs = \{S,R1,R2\} \big) \cap \big(B_2(i) < M_{R2},\mathbb{O},\\
		&B_1(i) < M_{R1} \big)\Big].
	\end{split}
\end{equation}

Let $\psi_{R1}$ represent the energy buffer convergence parameter for the energy update process $B_1(i)$ of Eq. (\ref{eq10}), which can be expressed as follows
\begin{equation}\label{eq22}
	\begin{split}
		\psi_{R1}  & = \lambda_1 M_{R1}\Big[p_{SR1}*\mathfrak{c}((1-\mathfrak{d})+\mathfrak{d}(1-e^{-W_{SR2}\Gamma_{th}})e^{-W_{R1R2}\Gamma_{th}}\\
		&+PU2((p_1+p_2+p_3)*\mathfrak{o}*\mathfrak{f}(1-\mathfrak{g})\\
		&+(1-PU2)*((p_1+p_2+p_3)*\mathfrak{o}(1-\mathfrak{g}))\Big]\\
		&= \lambda_1 M_{R1}b_1 ,
	\end{split}
\end{equation}
where 
\begin{equation}\label{b_1}
	\begin{split}
		b_1  & = p_{SR1}*\mathfrak{c}((1-\mathfrak{d})+\mathfrak{d}(1-e^{-W_{SR2}\Gamma_{th}})e^{-W_{R1R2}\Gamma_{th}}\\
		&+PU2((p_1+p_2+p_3)*\mathfrak{o}*\mathfrak{f}(1-\mathfrak{g})\\
		&+(1-PU2)*((p_1+p_2+p_3)*\mathfrak{o}(1-\mathfrak{g})).
	\end{split}
\end{equation}
The limiting PDF when $\psi_{R1} > 1$ is obtained in Theorem 3, and the scenario when $\psi_{R1} \le 1$ is discussed in Theorem 4.
\begin{theorem}\label{theo1}
	If $\psi_1 > 1$, the energy update process $B_1(i)$ in Eq. (\ref{eq10}) will have a stationary distribution. Furthermore, the limiting PDF of the energy buffer state at R1 can be expressed by
	\begin{equation}\label{eq23}
		g_1(x) = \begin{cases}
			\dfrac{1}{M_{R1}}\left(1-e^{Q_1 x}\right), & 0 \leq x < M_{R1} \\
			\dfrac{1}{M_{R1}}\dfrac{-Q_1 e^{Q_1 x}}{\left(b_1\lambda_1+Q_1\right)}, & x \ge M_{R1}.
		\end{cases}
	\end{equation}

And
\begin{equation}\label{Q_1}
	\begin{split}
		Q_1 = \frac{-W\left(-b_1\lambda_1 M_R e^{-b_1\lambda_1 M_{R1}}\right)}{M_{R1}}-b_1\lambda_1 < 0,
	\end{split}
\end{equation}
 satisfying $b_1 \lambda_1 e^{Q_1 M_{R1}} = b_1 \lambda_1 + Q_1$.

\end{theorem}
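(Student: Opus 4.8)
The plan is to follow the same route as the proof of Theorem~\ref{theo3} (the $R2$ case), because the recursion (\ref{eq10}) for $B_1(i)$ has the identical two-branch structure as (\ref{eq24}): a ``charge'' branch $B_1(i+1)=B_1(i)+X(i)$ and a ``discharge-and-charge'' branch $B_1(i+1)=B_1(i)-M_{R1}+X(i)$, the latter admissible only when $B_1(i)\ge M_{R1}$. First I would observe that, since the Rayleigh fading coefficients, the transmitter-candidate state, and the companion buffer level $B_2(i)$ are independent of $B_1(i)$, the conditional probability that the discharge branch $\mathbb{P}_{12}$ of (\ref{eq10}) is taken given $B_1(i)\ge M_{R1}$ equals the state-independent constant $b_1$ of (\ref{b_1}); indeed, collecting the sub-events of $\mathbb{P}_{12}$ listed in (\ref{eq12}), weighting them by the stationary transmitter-candidate probabilities $p_{SR1},p_1,p_2,p_3$, by $\mathfrak{c},\mathfrak{d},\mathfrak{f},\mathfrak{g},\mathfrak{o}$, and by $PU2=\mathrm{Pr}\{B_2(i)\ge M_{R2}\}$ (read off from the $R2$ limiting PDF of Theorem~\ref{theo3}), is precisely how $b_1$ is assembled. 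Writing $g_1$ for a candidate stationary density and exploiting the memorylessness of the exponential harvest $X(i)$, the discrete-time continuous-state balance equation then becomes
\begin{equation*}
	\begin{split}
		g_1(x)={}&\lambda_1\int_0^{x} a(y)\,g_1(y)\,e^{-\lambda_1(x-y)}\,dy\\
		&+\lambda_1 b_1\int_{M_{R1}}^{x+M_{R1}} g_1(y)\,e^{-\lambda_1(x-y+M_{R1})}\,dy,
	\end{split}
\end{equation*}
with $a(y)=1$ on $[0,M_{R1})$ and $a(y)=1-b_1$ on $[M_{R1},\infty)$.

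Next I would differentiate this identity in $x$, using $\tfrac{d}{dx}\int_0^x\phi(y)e^{-\lambda_1(x-y)}\,dy=\phi(x)-\lambda_1\int_0^x\phi(y)e^{-\lambda_1(x-y)}\,dy$, which yields $g_1'(x)=\lambda_1\big(a(x)-1\big)g_1(x)+\lambda_1 b_1\,g_1(x+M_{R1})$; this reads $g_1'(x)=\lambda_1 b_1\,g_1(x+M_{R1})$ on $0\le x<M_{R1}$ and $g_1'(x)=-\lambda_1 b_1 g_1(x)+\lambda_1 b_1\,g_1(x+M_{R1})$ on $x\ge M_{R1}$. Positing $g_1(x)=c(1-e^{Q_1x})$ on $[0,M_{R1})$ and $g_1(x)=c'e^{Q_1x}$ on $[M_{R1},\infty)$ and substituting back, the equation on $[M_{R1},\infty)$ forces the characteristic relation $b_1\lambda_1 e^{Q_1 M_{R1}}=b_1\lambda_1+Q_1$, the equation on $[0,M_{R1})$ ties $c$ to $c'$, continuity of $g_1$ at $x=M_{R1}$ reduces --- via the same characteristic relation --- to no new constraint, and the normalization $\int_0^\infty g_1(x)\,dx=1$ then collapses, again using the characteristic relation, to $c=1/M_{R1}$ and hence $c'=\tfrac{1}{M_{R1}}\tfrac{-Q_1}{b_1\lambda_1+Q_1}$, reproducing the two pieces of (\ref{eq23}). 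Solving $b_1\lambda_1 e^{Q_1 M_{R1}}=b_1\lambda_1+Q_1$ by the substitution $t=(Q_1+b_1\lambda_1)M_{R1}$ turns it into $(-t)e^{-t}=(-\psi_{R1})e^{-\psi_{R1}}$ with $\psi_{R1}=b_1\lambda_1 M_{R1}$, which expresses $Q_1$ in closed form through the Lambert $W$ function exactly as in (\ref{Q_1}); the principal branch $W_0$ is the correct one --- the branch $W_{-1}$ gives back the spurious root $Q_1=0$ --- and since $\psi_{R1}>1$ one obtains $0<t<1$, hence $Q_1<(1-\psi_{R1})/M_{R1}<0$, which is exactly what makes $g_1$ integrable on $[M_{R1},\infty)$ and nonnegative on $[0,M_{R1})$.

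Finally I would establish existence and uniqueness of the stationary law. A drift computation shows that for $B_1(i)\ge M_{R1}$ the expected one-step increment of $B_1$ equals $\tfrac{1}{\lambda_1}-M_{R1}b_1=\tfrac{1}{\lambda_1}(1-\psi_{R1})$, strictly negative exactly when $\psi_{R1}>1$; combined with the fact that from any state in $[0,M_{R1})$ the chain almost surely re-enters $[M_{R1},\infty)$ within finitely many slots, a Foster--Lyapunov argument gives positive recurrence, hence a unique stationary distribution to which $B_1(i)$ converges, so the density constructed above must be the limiting PDF. The step I expect to be the main obstacle, just as in Theorem~\ref{theo3}, is the very first reduction: rigorously decoupling $B_1(i)$ from the fading, the transmitter-candidate state, and $B_2(i)$ so that the per-slot discharge probability given $B_1(i)\ge M_{R1}$ is exactly the constant $b_1$, and correctly gathering $b_1$ from every branch of (\ref{eq12}). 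Once that reduction is granted, the remainder is the standard Lambert-$W$ fixed-point computation, structurally identical to the $R2$ analysis behind (\ref{eq31}).
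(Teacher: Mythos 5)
Your proposal is correct and lands on exactly the paper's formulas, but the route differs from the paper's Appendix D in several respects worth noting. The paper starts from the CDF: it expands $\mathrm{Pr}\{B_1(i+1)\le x\}$ by total probability over $\mathbb{P}_{11},\mathbb{P}_{12}$, passes to the stationary $G_1=G_{11}\cup G_{12}$ with the weights $a_1,b_1$, differentiates to get two coupled integral equations, assumes the exponential form only for the upper piece $g_{12}(x)=k_1e^{Q_1x}$ (justifying uniqueness by citing the prior work of Morsi et al.), solves the lower piece as a Volterra equation of the second kind via the resolvent formula, fixes $k_1$ by normalization, and then explicitly verifies the residual consistency identity (the analogue of Eq.~(\ref{APPENDIX B13b})). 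You instead work directly with the density balance equation, convert it by differentiation into a delay-type ODE, posit the ansatz on \emph{both} pieces, and close the system with continuity at $M_{R1}$ plus normalization; you also replace the paper's citation-based existence/uniqueness with a drift (Foster--Lyapunov) argument whose negativity condition is precisely $\psi_{R1}>1$, and you make the Lambert-$W$ branch choice explicit (principal branch, with $W_{-1}$ recovering the spurious root $Q_1=0$), which the paper only handles implicitly through the inequality $W(-b_1\lambda_1M_{R1}e^{-b_1\lambda_1M_{R1}})>-b_1\lambda_1M_{R1}$. Your route buys a shorter computation (no Volterra resolvent, no separate consistency check written out) and a stability criterion tied directly to the theorem's hypothesis rather than to the weaker condition $M_{R1}>1/\lambda_1$ quoted from the reference; the paper's route buys an explicit derivation (rather than an ansatz) of the $0\le x<M_{R1}$ piece. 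One caution for your version: differentiating the balance equation is only equivalent to it up to a homogeneous term $Ce^{-\lambda_1x}$, so you must additionally check the boundary value $g_1(0)=0$ (which your ansatz satisfies automatically) --- this check plays the same role as the paper's verification of Eq.~(\ref{APPENDIX B13b}) and should be stated; likewise both you and the paper rely on the same decoupling of $B_1(i)$ from the TCs state, the fading, and $B_2(i)$ (through $PU2$) to make the discharge probability the constant $b_1$, so that is not a gap relative to the paper.
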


\begin{proof}
	The proof is given in Appendix D.
\end{proof}

\begin{theorem}\label{theo2}
	Similar to Theorem \ref{theo4}, when $\psi_{R1} \leq 1$, there is no stationary PDF of the energy buffer state of R1. And the energy in the R1 energy buffer keeps more than $M_{R1}$.
\end{theorem}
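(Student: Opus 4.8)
The plan is to follow the same route as the proof of Theorem~\ref{theo4} for the R2 buffer, now applied to the update process $B_1(i)$ in Eq.~(\ref{eq10}). First I would isolate the one–step conditional drift of $B_1(i)$ on the event $\{B_1(i)\ge M_{R1}\}$. Inspecting the conditions $\mathbb{P}_{11}$ and $\mathbb{P}_{12}$ in Eq.~(\ref{eq11}) and Eq.~(\ref{eq12}), a discharge of $M_{R1}$ units occurs only when $B_1(i)\ge M_{R1}$, and, after averaging over the transmitter–candidate states, the channel events, and (through $PU2$) the state of the R2 buffer, its conditional probability is exactly $b_1$ as defined in Eq.~(\ref{b_1}); indeed this is how $\psi_{R1}=\lambda_1 M_{R1} b_1$ was built. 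Both branches of Eq.~(\ref{eq10}) add $X(i)$, so on $\{B_1(i)\ge M_{R1}\}$
\begin{equation}
	\mathbb{E}\big[\,B_1(i+1)-B_1(i)\,\big|\,B_1(i)\ge M_{R1}\,\big]=\frac{1}{\lambda_1}-M_{R1}b_1=\frac{1-\psi_{R1}}{\lambda_1},
\end{equation}
while on $\{B_1(i)<M_{R1}\}$ only the charging branch $B_1(i+1)=B_1(i)+X(i)$ applies, giving the strictly positive drift $1/\lambda_1$.

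Second, since $\psi_{R1}\le 1$ the drift is non-negative everywhere and strictly positive below $M_{R1}$, and inside the bounded set $[0,M_{R1})$ the chain can only move upward by the almost-surely positive amount $X(i)$. A Foster--Lyapunov / drift argument (the mirror of Appendix~C) then shows the chain cannot be positive recurrent, so no stationary PDF $g_1$ exists. For $\psi_{R1}<1$ this can be made concrete: writing $B_1(i)=B_1(0)+\sum_{k<i}X(k)-M_{R1}D_i$ with $D_i$ the number of discharges up to time $i$, one bounds the discharge frequency by $b_1$ and applies the strong law to get $B_1(i)/i\to(1-\psi_{R1})/\lambda_1>0$ almost surely, hence $B_1(i)\to\infty$; consequently there is an almost-surely finite index after which $B_1(i)>M_{R1}$ for all later time slots, which is the second assertion. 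The borderline case $\psi_{R1}=1$ is handled by noting the increments have finite variance (sums and differences of exponential and bounded terms), so the walk is at best null recurrent, again precluding a stationary distribution.

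The main obstacle is the coupling: the per-step discharge probability of $B_1$ is not a fixed constant but depends on the transmitter–candidate occupation probabilities $p_{SR1},p_1,p_2,p_3$ and, through the indicator $B_2(i)\ge M_{R2}$, on the R2 buffer. The clean way around this is to use results already in hand — the STM fixed point that defines the $p$'s and, when $\psi_{R2}\le 1$, Theorem~\ref{theo4}, which guarantees $B_2(i)>M_{R2}$ eventually so that the relevant conditional probability settles to $b_1$ — and then run the drift argument on the resulting asymptotically time-homogeneous chain. A secondary point to treat carefully is that $b_1$ must be verified to be an upper bound for the per-step discharge probability \emph{uniformly} in the buffer state, not merely its stationary average, so that $\mathbb{E}[B_1(i+1)-B_1(i)\mid B_1(i)\ge M_{R1}]\ge (1-\psi_{R1})/\lambda_1\ge 0$ holds along every trajectory; here one uses the structure of $\mathbb{P}_{12}$, each discharge branch carrying the factor $\mathbb{C}$, $\mathbb{O}$, or $\mathbb{F}$ together with at most one relay-energy indicator.
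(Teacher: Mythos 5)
Your proposal is correct and follows essentially the same route the paper intends: the paper gives no separate proof for this theorem, deferring to the R2 case in Appendix~C, whose argument is exactly your drift comparison — writing the update as $B(i+1)-B(i)=X(i)-M_{R1}\theta(i)O_R(i)$, bounding the average consumption rate by $M_{R1}b_1$ via the law of large numbers, and concluding from $\psi_{R1}\le 1$ that energy accumulates, so no stationary PDF exists and $B_1(i)>M_{R1}$ after finitely many slots. Your extra care about the coupling through $PU2$ and the borderline case $\psi_{R1}=1$ (which the paper handles by citing Loynes' instability result) only tightens what the paper states informally.
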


\section{Analysis of Outage Probability and Throughput}
This section focuses on the analysis of outage probability,  timeslot cost, and throughput of our system, when both $B_1(i)$ and $B_2(i)$ have limiting PDFs.

According to OR protocol, when all the transmitter candidates fail to transmit signals to the D node, the communication network system is defined as an outage. In an outage, transmitter candidates can transmit the signal to other nodes. Obviously, the system outage probability (OP) can be expressed as
\begin{equation}\label{SEC IV1}
	\begin{split}
		OP = rp_{S}+rp_{SR1}+rp_{SR2}+rp_{SR1R2},
	\end{split}
\end{equation}
where, $rp_{S}$, $rp_{SR1}$, $rp_{SR2}$ and $rp_{SR1R2}$ are the outage probability of transmitter candidates set \{S\}, \{S,R1\}, \{SR2\}, \{S,R1,R2\}, respectively. The outage probability of the node means the transmitter candidates as the broadcast node fail to transmit signals to the D node. Specifically, they can be expressed by

\begin{equation}\label{SEC IV2}
	\begin{split}
		rp_S & = p_S \mathrm{Pr}\{\gamma_{SD}(i) <\Gamma_{th}\}\\
		& = p_S \left(1-e^{-W_{SD}\Gamma_{th}}\right),
	\end{split}
\end{equation}
\begin{equation}\label{SEC IV4}
	\begin{split}
		rp_{SR1} & = p_{SR1} [\mathrm{Pr}\{\mathbb{C},\mathbb{D},B_1(i) \ge M_{R1}\}\\
		&+\mathrm{Pr}\{\mathbb{C},B_1(i) < M_{R1}\} ]\\
		&=p_{SR1}*\mathfrak{c}(\mathfrak{d}*PU1+1-PU1),
	\end{split}
\end{equation}
\begin{equation}\label{SEC IV3}
	\begin{split}
		rp_{SR2} & =  p_{SR2} [\mathrm{Pr}\{\mathbb{M},\mathbb{N},B_2(i) \ge M_{R2}\}
		\\&+\mathrm{Pr}\{\mathbb{M},B_2(i) < M_{R2}\} ]\\
		&=p_{SR2}*\mathfrak{m}(\mathfrak{n}*PU2+1-PU2),
	\end{split}
\end{equation}

\begin{equation}\label{SEC IV}
	\begin{split}
		&rp_{SR1R2} \\
		& =  (p_{1}+p_2+p_3) \Big[\mathrm{Pr}\{\mathbb{O},\mathbb{F},B_2(i) \ge M_{R2},\mathbb{G},B_1(i) \ge M_{R1}\}\\
		&+\mathrm{Pr}\{\mathbb{O},\mathbb{F},B_2(i) \ge M_{R2},B_1(i) < M_{R1}\}\\
		&+\mathrm{Pr}\{\mathbb{O},B_2(i) < M_{R2},\mathbb{G},B_1(i) \ge M_{R1}\}\\
		&+\mathrm{Pr}\{\mathbb{O},B_2(i) < M_{R2},B_1(i) < M_{R1}\}\Big]\\
		&=(p_{1}+p_2+p_3)(\mathfrak{o}(\mathfrak{f}*PU2+1-PU2)\\
		&\qquad\times(\mathfrak{g}*PU1+1-PU1)).
	\end{split}
\end{equation}

With the Eq. (\ref{eq31}), we get
\begin{equation}\label{SEC IV5}
	\begin{split}
		\mathrm{Pr}\{B_2(i) < M_{R2} \} & =  \int_{x=0}^{M_{R2}} \dfrac{\left(1-e^{Q_2 x}\right)}{M_{R2}}\, dx \\
		& = 1-\dfrac{\left(e^{Q_2 M_{R2}}-1\right)}{Q_2 M_{R2}}.
	\end{split}
\end{equation}
Due to $b_2 \lambda_2 e^{Q_2 M_{R2}} = b_2 \lambda_2 + Q_2$, we get $e^{Q_2 M_{R2}} = 1+\frac{Q_2}{b_2\lambda_2}$, and thus
\begin{equation}\label{SEC IV6}
	\begin{split}
		\mathrm{Pr}\{B_2(i) < M_{R2}\} & = 1-\dfrac{\left(1+\frac{Q_2}{b_2\lambda_2}-1\right)}{Q_2 M_{R2}}\\
		& = 1-\dfrac{1}{b_2\lambda_2 M_{R2}}.
	\end{split}
\end{equation}
Accordingly, we get
\begin{equation}\label{SEC IV7}
	\begin{split}
		\mathrm{Pr}\{B_2(i) \ge M_{R2}\} & = 1-\mathrm{Pr}\{B_2(i) < M_{R2}\}\\
		& = \dfrac{1}{b_2\lambda_2 M_{R2}}=PU2.
	\end{split}
\end{equation}
Similarly, with the Eq. (\ref{eq31}) and $b_1 \lambda_1 e^{Q_1 M_{R1}} = b_1 \lambda_1 + Q_1$, we can get
\begin{equation}\label{SEC IV17}
	\begin{split}
		\mathrm{Pr}\{B_1(i) < M_{R1}\}& = 1-\dfrac{1}{b_1\lambda_1 M_{R1}},
	\end{split}
\end{equation}
and
\begin{equation}\label{SEC IV18}
	\begin{split}
		\mathrm{Pr}\{B_1(i) \ge M_{R1}\}& =\dfrac{1}{b_1\lambda_1 M_{R1}}=PU1.
	\end{split}
\end{equation}
With the help of Eq. (\ref{c}) to  Eq. (\ref{o}) and Eq. (\ref{b_1}) as well as Eq. (\ref{b_2}) and Eq. (\ref{SEC IV1}) to  Eq. (\ref{SEC IV}) can be rewritten as follows
\begin{equation}\label{SEC IV19}
	\begin{split}
		OP & = p_S(1-e^{-W_{SD}\Gamma_{th}})\\
		&+p_{SR1}\Big(\mathfrak{c}*(\mathfrak{d}*PU1+1-PU1)\Big)\\
		&+p_{SR2}\Big(\mathfrak{c}*(\mathfrak{n}*PU2+1-PU2)\Big)\\
		&+(p_{1}+p_2+p_3)\Big(\mathfrak{o}(f*PU2+1-PU2)\\
		&\times(\mathfrak{g}*PU1+1-PU1)\Big).
	\end{split}
\end{equation}
Correspondingly, the throughput of the system $\tau$ can be defined as 
\begin{equation}\label{SEC IV20}
	\begin{split}
		\tau  & = \left(1 - OP\right)R_0\\
		& = R_0\Bigg[1-\Bigg(p_S(1-e^{-W_{SD}\Gamma_{th}})\\
		&+p_{SR1}\Big(\mathfrak{c}*(\mathfrak{d}*PU1+1-PU1)\Big)\\
		&+p_{SR2}\Big(\mathfrak{c}*(\mathfrak{n}*PU2+1-PU2)\Big)\\
		&+(p_{1}+p_2+p_3)\Big(\mathfrak{o}(\mathfrak{f}*PU2+1-PU2)\\
		&\times(\mathfrak{g}*PU1+1-PU1)\Big)\Bigg)\Bigg] .
	\end{split}
\end{equation}

Furthermore, the timeslot cost for each data $T_c$ is a performance index worthy of attention. And it can be calculated as follows:
\begin{equation}
	\begin{split}
		T_c  & = \lim_{N \to \infty}\sum_{i=1}^N OP^{i-1}(1-OP)i\\
		&=\lim_{N \to \infty} (1-OP) \frac{1}{1-OP}[1+\frac{(OP)(1-(OP)^{N-1})}{1-OP}\\
		&-N(OP)^{N}]\\
		&=\frac{1}{1-OP}.
	\end{split}
\end{equation}

\section{performance results}

In this section, simulations are presented to verify the rationality of the derived theoretical expressions. In addition, the performance comparison between applied MRC and non-applied MRC is performed. For all simulations, the following system parameters are taken into account unless otherwise specified. Suppose S, R1, R2, and D are all located on a two-dimensional plane, and their position coordinates are (0,0), (30,20), (60,-20), and (100,0), respectively. Path-loss exponent $\alpha = -3$. The transmitting power of S node $P_{Ts} = 12 $dBm. The channel noise variance $N_0=-50$dBm. With the measure of Alg. \ref{Alg.1}, the corresponding transmitter candidate set probabilities, $p_{\gamma_{overall1}}$, $p_{\gamma_{overall2}}$, $p_{\gamma_{overall3}}$, '$\mathfrak{c}$', '$\mathfrak{d}$', '$\mathfrak{f}$', '$\mathfrak{g}$', '$\mathfrak{m}$', '$\mathfrak{n}$' and '$\mathfrak{o}$' are obtained. Furthermore, in all figures, point markers represent simulation values, and solid red lines indicate the STM-based theoretical values.

In Fig. \ref{fig5}, three subfigures depict the PDF of the value of SNR received by D in different TCs, namely $p_{\gamma_{overall1}}$, $p_{\gamma_{overall2}}$, $p_{\gamma_{overall3}}$. The red solid line represents the theoretic value calculated with Alg. \ref{Alg.1}. It can be observed that the simulation values, represented by the blue circle markers, lay on the red solid line, which shows the rationality of the Alg. \ref{Alg.1}.
\begin{figure*}[h!]

		\begin{minipage}{0.33\linewidth}
			\vspace{3pt}
			\centerline{\includegraphics[width=\textwidth]{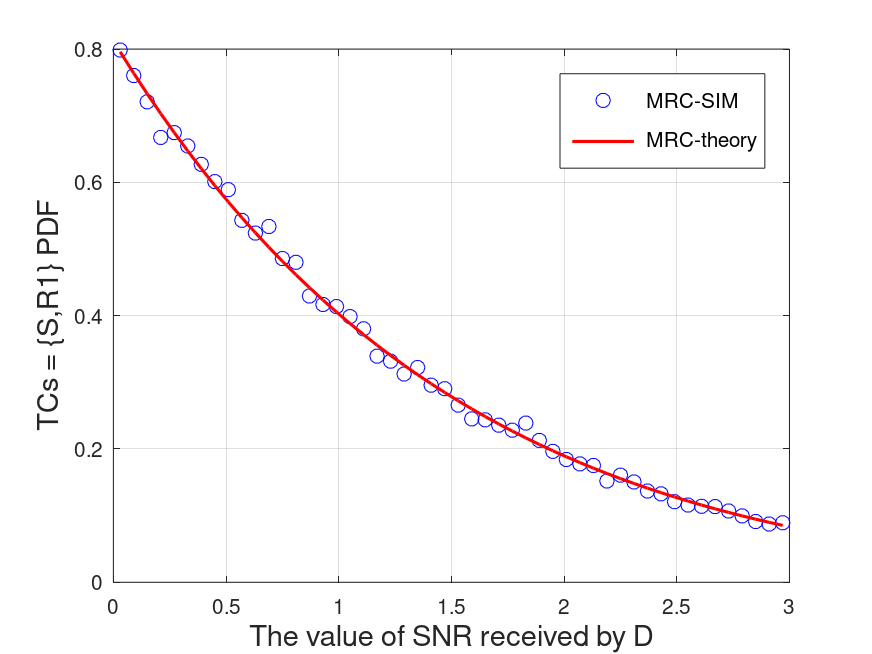}}
		\end{minipage}
		\begin{minipage}{0.33\linewidth}
			\vspace{3pt}
			\centerline{\includegraphics[width=\textwidth]{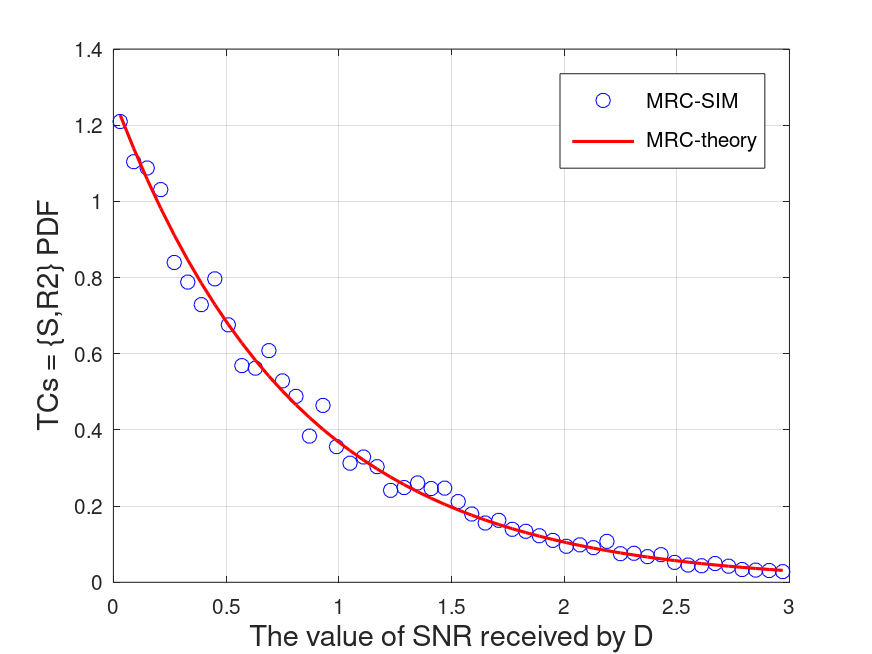}}
			
		\end{minipage}
		\begin{minipage}{0.33\linewidth}
	\vspace{3pt}
	\centerline{\includegraphics[width=\textwidth]{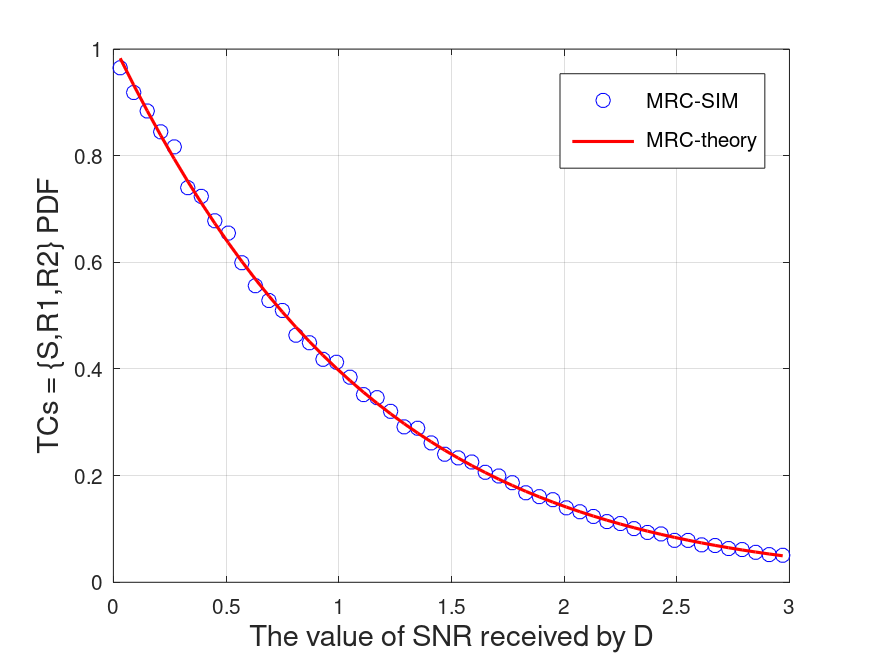}}
\end{minipage}
	\caption{PDF of the value of SNR received by D in different TCs with  parameters $\lambda_1 = -12$ dB, $\lambda_2 = -15$ dB, $R_0=2$ bit/s/Hz, $M_1 = 10$ mJoules, $M_2= 8$ mJoules.}
	\label{fig5}
\end{figure*}

Fig. \ref{fig1} and Fig. \ref{fig8} depict the limiting PDF of energy stored in the R1 buffer and R2 buffer, obtained by simulation and theoretical analysis. It can be clearly seen from Fig. \ref{fig1} that the solid line and dashed line almost exactly coincide, and they are almost in the middle of the simulation values. Furthermore, the same case can be seen in Fig. \ref{fig8}, which effectively proves the accuracy of the derived theoretical expressions in Eq. (\ref{eq23}) and Eq. (\ref{eq30}).

\begin{figure}[h!]
	\centerline{\includegraphics[width=3.5in]{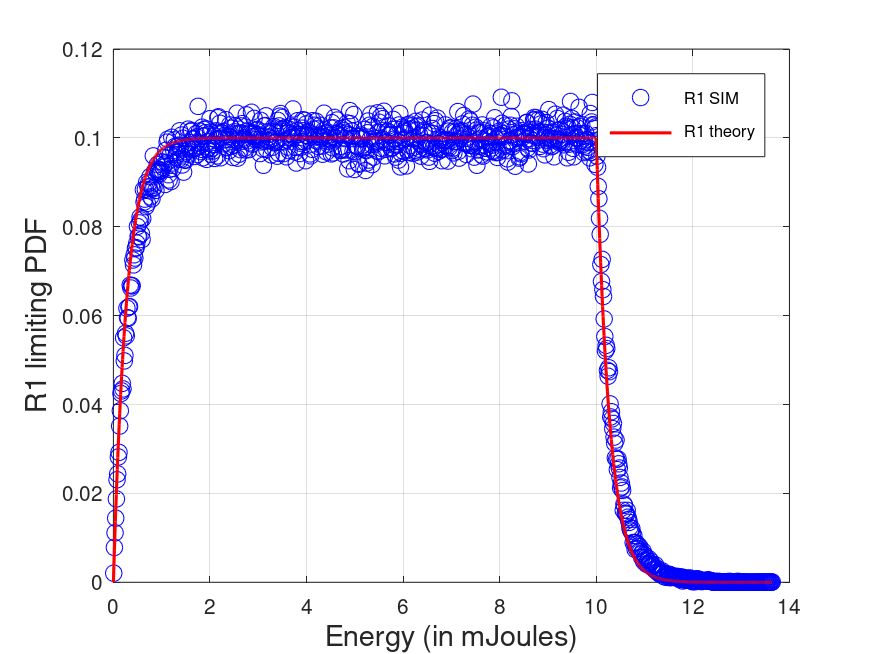}}
	\caption{Limiting distribution of energy with infinite-size PEB of relay 1 and parameters $\lambda_1 = -15$ dB, $\lambda_2 = -12$ dB, $R_0=2$ bit/s/Hz, $M_1 = 10$ mJoules, $M_2= 8$ mJoules.}
	\label{fig1}
\end{figure}

\begin{figure}
	\centerline{\includegraphics[width=3.5in]{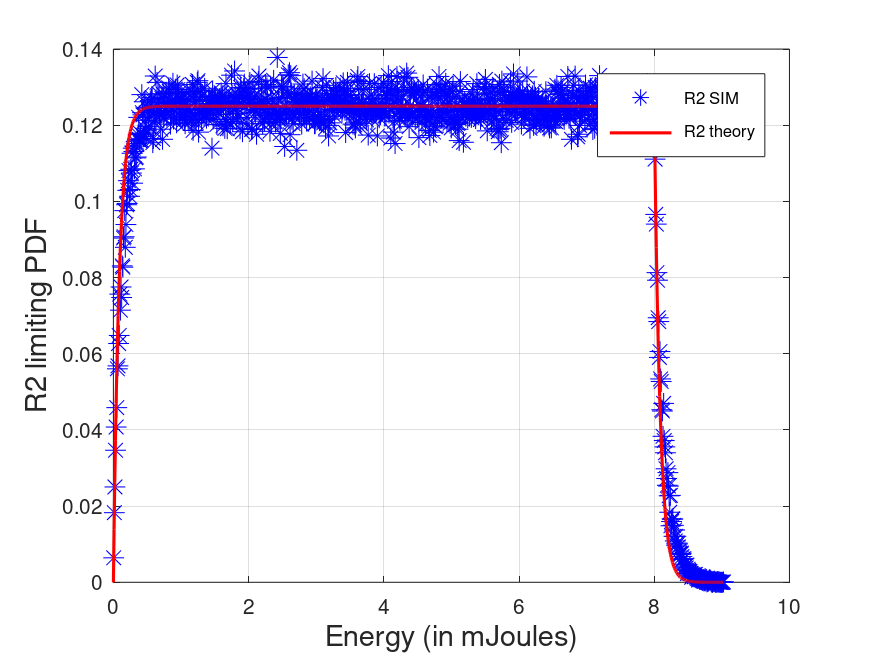}}
	\caption{Limiting distribution of energy with finite-size PEB of relay 2 and parameters $\lambda_1 = -15$ dB, $\lambda_2 = -12$ dB, $R_0=2$ bit/s/Hz, $M_1 = 10$ mJoules, $M_2= 8$ mJoules..}
	\label{fig8}
\end{figure}

\begin{figure}[h]
	\centerline{\includegraphics[width=3.5in]{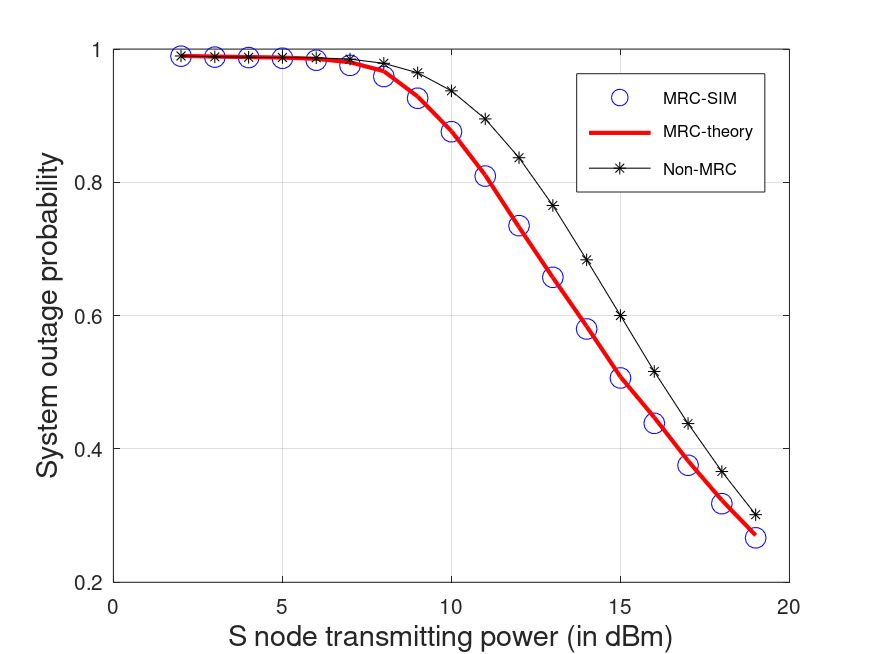}}
	\caption{Outage probability of system vs. $P_{Ts}$, with parameters $\lambda_1 = -11$ dB, $\lambda_2 = -12$ dB, $R_0=2$ bit/s/Hz, $M_1 = 12$ mJoules, $M_2= 10$ mJoules.}
	\label{fig2}
\end{figure}
\par Fig. \ref{fig2} shows the system outage probability of the system applied MRC obviously less than Non-MRC system. In additon, Fig. \ref{fig3} depicts the throughput of system and our system also outperform to Non-MRC system. This is because the MRC algorithm can combine the multipath signal with the appropriate coefficient to achieve diversity gain. Furthermore, it can be obvserved that the outage and throughput change slightly, when $P_{Ts}$ is from 2 dBm to 8 dBm, and change considerably when $P_{Ts}$ is from 9 dBm to 19 dBm. The reason for this phenomenon is that, when $P_{Ts}$ is small, it is difficult for S node transmitting data packet reliably to any other node, in other words, in the whole system, only S node is activated in most time. With the increase in $P_{Ts}$, S node is able to transmit the data packet to relay nodes. With the help of relay nodes, the performance of system improves substantially.

\begin{figure}[h]
		\centerline{\includegraphics[width=3.5in]{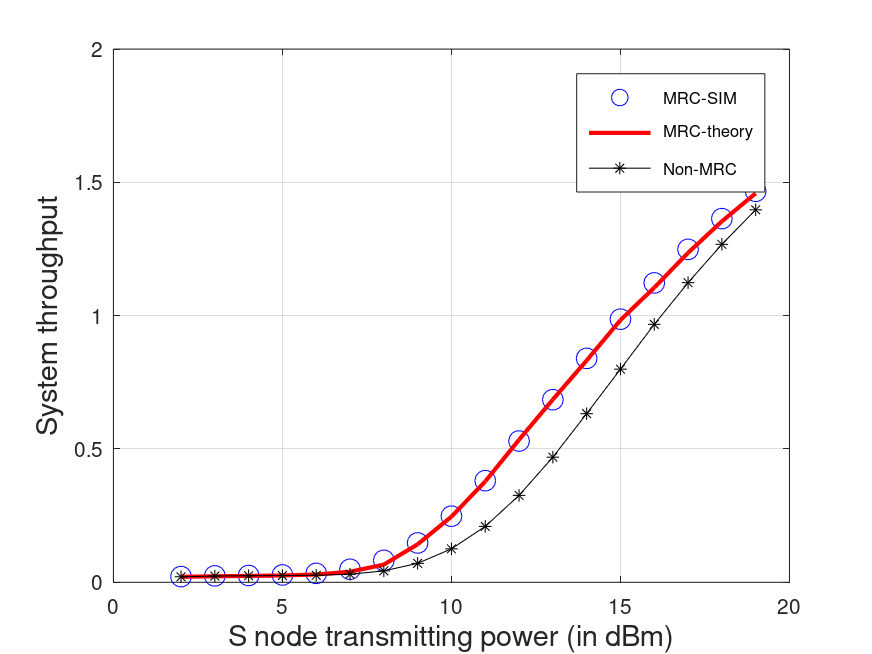}}
	\caption{Throughput of system vs. $P_{Ts}$, with with parameters $\lambda_1 = -11$ dB, $\lambda_2 = -12$ dB, $R_0=2$ bit/s/Hz, $M_1 = 12$ mJoules, $M_2= 10$ mJoules.}
	\label{fig3}
	
\end{figure}

\begin{figure}[h]
			\centerline{\includegraphics[width=3.5in]{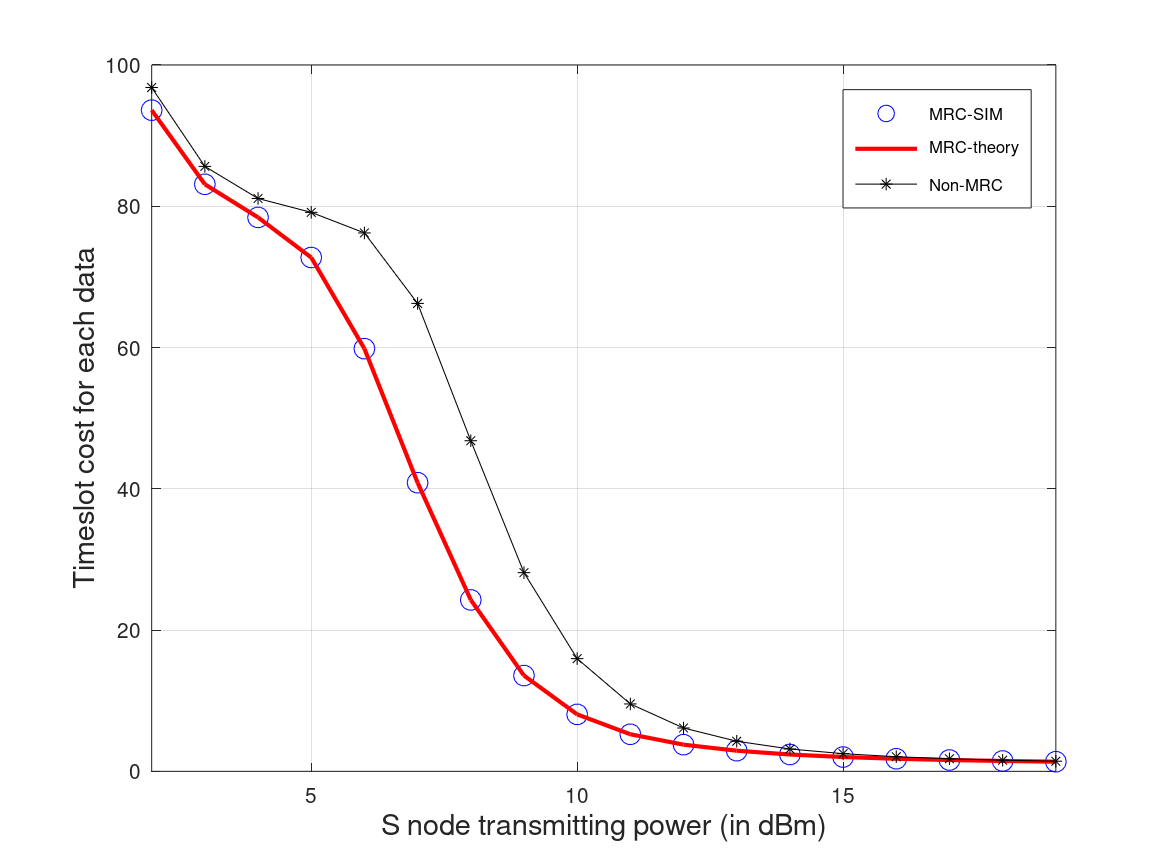}}
	\caption{Timeslot cost for each data vs. $P_{Ts}$, with parameters $\lambda_1 = -11$ dB, $\lambda_2 = -12$ dB, $R_0=2$ bit/s/Hz, $M_1 = 12$ mJoules, $M_2= 10$ mJoules.}
	\label{fig4}
	
\end{figure}
In Fig. \ref{fig4}, the system timeslot cost for each data are demonstrated. These simulations can clarify that the MRC system outperforms the Non-MRC system. Furthermore, it can be seen that timeslot cost of MRC system is close to that of Non-MRC system when $P_{Ts}$ is from 2 dBm to 4 dBm and from 14 dBm to 19dBm. This is because, when $P_{Ts}$ is too small, the SNR of the signals combined is also too small to reach the SNR threshold and transmit the data packet. As a reasult, the effect of MRC is not obvious. And when $P_{Ts}$ is too large, even the Non-MRC system can transmit the data packet to D easily so that the effect of MRC is not significant. Thus, if $P_{Ts}$ is from 5 dBm to 13 dBm, the Non-MRC system is hard to utilize the SNR of single signal to reach SNR threshold, however, our MRC system can exploit the SNR of multiple signals combined to finish the data packet transmitting. Hence, in these cases, MRC significantly improves system performance.

\section{Conclusion}
This paper combines OR aided cooperative communication network with EH nodes and MRC. By applying the discrete-time continuous-state space Markov chain model, the theoretical algorithm-based expressions for limiting the distributions of stored energy in infinite-size buffers are derived. Furthermore, by using both the limiting distributions of energy buffers and data-broadcasted probability at nodes, the theoretical numerical expressions for outage probability and throughput of the network are obtained. In the future, combining the generalized selection combining (GSC) with OR under the assumption of an EH cooperative communication network is worth being studied.

\appendices
\section{Algorithm for Probability $p_s $, $p_{SR1}$, $p_{SR2}$, $p_1$, $p_2$, $p_3$, $p_{\gamma_{overall1}}$, $p_{\gamma_{overall2}}$, $p_{\gamma_{overall3}}$, '$\mathfrak{c}$', '$\mathfrak{d}$', '$\mathfrak{f}$', '$\mathfrak{g}$', '$\mathfrak{m}$', '$\mathfrak{n}$' and '$\mathfrak{o}$' Based on STM}
Specifically, in line 9, 26, 44 and 58, the judgment condition $(p(i)\leq 0\,\vert\,\textbf{T}(i)<0)$ indicates that there are non-positive elements in $\textbf{p}(i)$ or $\textbf{T}(i)$, which is not desirable, so the update process needs to be terminated. Moreover, in line 12, 29, 47 and 61, $(||\textbf{p}(i)-\textbf{p}(i+1)||_{2}<10^{-7})$ indicates that the update error is small enough and the update process has converged. Therefore, the update process can be terminated.
\begin{breakablealgorithm}
	\caption{probability $p_s $, $p_{SR1}$, $p_{SR2}$, $p_1$, $p_2$, $p_3$, $p_{\gamma_{overall1}}$, $p_{\gamma_{overall2}}$, $p_{\gamma_{overall3}}$, '$\mathfrak{c}$', '$\mathfrak{d}$', '$\mathfrak{f}$', '$\mathfrak{g}$', '$\mathfrak{m}$', '$\mathfrak{n}$' and '$\mathfrak{o}$' based on STM}
	\label{Alg.1}
	\begin{algorithmic}[1]
		\Require $W_{SD}$, $W_{SR1}$, $W_{SR2}$, $W_{R1D}$, $W_{R1R2}$, $W_{R2D}$, $\Gamma_{th}$, $\lambda_1$, $\lambda_2$, $M_{R1}$, $M_{R2}$, $\mathcal{N}$.
		\State Initialize $i_4=0$, $k_4=0$, $j_1$ and $j_2 \in \{1,2,3,4,5,6\}$, $\textbf{p}(0)=[p_s, p_{SR1}, p_{SR2}, p_1, p_2, p_3]=[\frac{1}{6}, \frac{1}{6}, \frac{1}{6}, \frac{1}{6}, \frac{1}{6}, \frac{1}{6}]$, $PU1=PU2=\frac{1}{2}$.
		\While{1}
		\State Calculate $\textbf{T1}(i_4)$ and $\textbf{T2}(i_4)$ according to equations from Eq. (\ref{T1}) to Eq. (\ref{T2}),
		\State$i_1=0$, $k_1=0$
		\State$\textbf{p}_{\gamma_{overall1}}(0)=[p_{\gamma_{overall1}}(1),...,p_{\gamma_{overall1}}(\mathcal{N})]$
		\State $p_{\gamma_{overall1}}(j)=\frac{1}{\mathcal{N}}$, $1\leq j\leq \mathcal{N}$ and $j\in Z$
		\While{1}
		\State $\textbf{p}_{\gamma_{overall1}}(i_1+1) =\textbf{p}_{\gamma_{overall1}}(i_1)\textbf{T1}(i_4)$,
		\If {$(\textbf{p}_{\gamma_{overall1}}(i)\leq 0\,\vert\ \vert\,\textbf{T1}(i_4)<0)$}
		\State $k_1=i_1-1$,
		\State break,
		\ElsIf {$(||\textbf{p}_{\gamma_{overall1}}(i_1)-\textbf{p}_{\gamma_{overall1}}(i_1+1)||_{2}<10^{-7})$}
		\State $k_1=i_1$,
		\State break,
		\Else
		\State $i_1=i_1+1$,
		\EndIf
		\EndWhile
		\State$p_{\gamma_{overall1}}=\textbf{p}_{\gamma_{overall1}}(k_1)$
		\State
		\State$i_2=0$, $k_2=0$
		\State$\textbf{p}_{\gamma_{overall2}}(0)=[p_{\gamma_{overall2}}(1),...,p_{\gamma_{overall2}}(\mathcal{N})]$
		\State $p_{\gamma_{overall2}}(j)=\frac{1}{\mathcal{N}}$, $1\leq j\leq \mathcal{N}$ and $j\in Z$
		\While{1}
		\State $\textbf{p}_{\gamma_{overall2}}(i_2+1) =\textbf{p}_{\gamma_{overall2}}(i_2)\textbf{T2}(i)$,
		\If {$(\textbf{p}_{\gamma_{overall2}}(i)\leq 0\,\vert\ \vert\,\textbf{T2}(i_4)<0)$}
		\State $k_2=i_2-1$,
		\State break,
		\ElsIf {$(||\textbf{p}_{\gamma_{overall2}}(i_2)-\textbf{p}_{\gamma_{overall2}}(i_2+1)||_{2}<10^{-7})$}
		\State $k_2=i_2$,
		\State break,
		\Else
		\State $i_2=i_2+1$,
		\EndIf
		\EndWhile
		\State$p_{\gamma_{overall2}}=\textbf{p}_{\gamma_{overall2}}(k_2)$
		\State
		\State Calculate $\textbf{T3}(i_4)$, $p_{d_{S->R2}}$ and $p_{d_{R1->R2}}$ according to equations from Eq. (\ref{T31}) to Eq. (\ref{T32}),
		\State$i_3=0$, $k_3=0$
		\State$\textbf{p}_{\gamma_{overall3}}(0)=[p_{\gamma_{overall3}}(1),...,p_{\gamma_{overall3}}(\mathcal{N})]$
		\State $p_{\gamma_{overall3}}(j)=\frac{1}{\mathcal{N}}$, $1\leq j\leq \mathcal{N}$ and $j\in Z$
		\While{1}
		\State $\textbf{p}_{\gamma_{overall3}}(i_3+1) =\textbf{p}_{\gamma_{overall3}}(i_3)\textbf{T3}(i_4)$,
		\If {$(\textbf{p}_{\gamma_{overall3}}(i)\leq 0\,\vert\ \vert\,\textbf{T3}(i_4)<0)$}
		\State $k_3=i_3-1$,
		\State break,
		\ElsIf {$(||\textbf{p}_{\gamma_{overall3}}(i_3)-\textbf{p}_{\gamma_{overall3}}(i_3+1)||_{2}<10^{-7})$}
		\State $k_3=i_3$,
		\State break,
		\Else
		\State $i_3=i_3+1$,
		\EndIf
		\EndWhile
		\State$p_{\gamma_{overall3}}=\textbf{p}_{\gamma_{overall3}}(k_3)$
		\State Calculate '$\mathfrak{c}$', '$\mathfrak{d}$', '$\mathfrak{f}$', '$\mathfrak{g}$', '$\mathfrak{m}$', '$\mathfrak{n}$' and '$\mathfrak{o}$' and equations from Eq. (\ref{R2D}) to Eq. (\ref{o}),
		\State Calculate $\textbf{T}(i)$ according to $\textbf{p}(i)$ and equations from Eq. (\ref{P_s}) to Eq. (\ref{p_{(SR1R2)_3-(SR1R2)_3}}),
		\State $\textbf{p}(i_4+1) =\textbf{p}(i_4)\textbf{T}(i_4)$,
		\If {$(\textbf{p}_{j_1}(i)\leq 0\,\vert\ \vert\,\textbf{T}_{j_1,j_2}(i_4)<0)$}
		\State $k_4=i_4-1$,
		\State break,
		\ElsIf {$(||\textbf{p}(i_4)-\textbf{p}(i_4+1)||_{2}<10^{-7})$}
		\State $k_=i_4$,
		\State break,
		\Else
		\State $i_4=i_4+1$,
		\EndIf
		\EndWhile
		\Ensure Probability Distribution $\textbf{p}(k_4)$, $p_{\gamma_{overall1}}$, $p_{\gamma_{overall2}}$, $p_{\gamma_{overall3}}$, '$\mathfrak{c}$', '$\mathfrak{d}$', '$\mathfrak{f}$', '$\mathfrak{g}$', '$\mathfrak{m}$', '$\mathfrak{n}$' and '$\mathfrak{o}$'.
	\end{algorithmic}
\end{breakablealgorithm}

\section{Proof of Theorem 1}
According to the total probability theorem, $\mathbb{P}_{21}$ and $\mathbb{P}_{22}$,  the CDF of $B_2(i+1)$ in storage process in Eq. (\ref{eq24}) may be evaluated as follows

\begin{equation}\label{APPENDIX C 1}
	\begin{split}
		& \mathrm{Pr}\{B_2(i+1) \leq x\} \\
		& = \mathrm{Pr}\{B_2(i)+X(i) \leq x, TCs = \{S\}\} \\
		&  + \mathrm{Pr}\{B_2(i)+X(i) \leq x, TCs = \{S,R1\} \}\\
		&  + \mathrm{Pr}\{B_2(i)+X(i) \leq x, TCs = \{S,R2\}, B_2(i) < M_{R2}\} \\
		&  + \mathrm{Pr}\{B_2(i)+X(i) \leq x, TCs = \{S,R2\}, B_2(i) \ge M_{R2},\mathbb{M},\mathbb{N} \} \\
		&  + \mathrm{Pr}\{B_2(i)+X(i) \leq x, TCs = \{S,R2\}, B_2(i) \ge M_{R2},\overline{\mathbb{C}}\} \\
		&  + \mathrm{Pr}\{B_2(i)+X(i) \leq x, TCs = \{S,R1,R2\},\\
		&\qquad \qquad B_2(i) \ge M_{R2},\mathbb{O},\mathbb{F} \} \\
		&  + \mathrm{Pr}\{B_2(i)+X(i) \leq x, TCs = \{S,R1,R2\},\\
		&\qquad \qquad B_2(i) \ge M_{R2},\overline{\mathbb{O}} \} \\
		&  + \mathrm{Pr}\{B_2(i)+X(i) \leq x, TCs = \{S,R1,R2\}, B_2(i) < M_{R2} \} \\
		&  + \mathrm{Pr}\{B_2(i)-M_{R2} +X(i) \leq x,TCs = \{S,R2\},\\
		&\qquad \qquad  B_2(i) \ge M_{R2},\mathbb{M},\overline{\mathbb{N}}\} \\
		&  + \mathrm{Pr}\{B_2(i)-M_{R2} +X(i) \leq x, TCs = \{S,R1,R2\},\\
		& \qquad \qquad B_2(i) \ge M_{R2},{\mathbb{O}},\overline{\mathbb{F}}\},
	\end{split}
\end{equation}
When $i \to \infty$, the equation $\mathrm{Pr}\{B_2(i+1) \leq x\} = G^{i+1}_2(x) = G^{i}_2(x) = G_2(x)$ holds for the steady state of energy buffer at RNs. Thus, Eq. (\ref{APPENDIX C 1}) can be written as follows
\begin{equation}\label{APPENDIX C 2}
	\begin{split}
		G_2(x) & = \left(p_S+p_{SR1} \right) \int_{\mu_2=0}^{x} F_X(x-\mu_2)g_2(\mu_2)\, d\mu_2 \\
		&  + \big(p_{SR2}*(\mathfrak{m}*\mathfrak{n}+1-\mathfrak{m})\\
		&+(p_1+p_2+p_3)(\mathfrak{o}*\mathfrak{f}+1-\mathfrak{o})\big)\\
		& \times\int_{\mu_2=M_{R2}}^{x} F_X(x-\mu_2)g_2(\mu_2)d\mu_2\\
		&  + (p_{SR2}+p_{SR1R2}) \int_{\mu_2=0}^{min(x,M_{R2})} F_X(x-\mu_2)g_2(\mu_2)\, d\mu_2 \\
		&  + \big(p_{SR2}*\mathfrak{m}*(1-\mathfrak{n})+(p_1+p_2+p_3)*\mathfrak{o}*(1-\mathfrak{f})\big)\\
		&  \times \int_{\mu_2=M_{R2}}^{x+M_{R2}} F_X(x+M_{R2}-\mu_2) g_2(\mu_2)\, d\mu_2,
	\end{split}
\end{equation}
where $g_2(x)$ is the derivative of $G_2(x)$. With simplifying Eq. (\ref{APPENDIX C 2}), we arrive at
\begin{equation}\label{APPENDIX C 3}
	G_2(x) = \begin{cases}
		G_{21}(x) , & 0 \leq x < M_{R2} \\
		G_{22}(x), & x \ge M_{R2}
	\end{cases}
\end{equation}
where,
\begin{equation}\label{APPENDIX C 4}
	\begin{split}
		G_{21}(x) & = b_2\int_{\mu_2=M_{R2}}^{x+M_{R2}} F_X(x+M_{R2}-\mu_2)g_1(\mu_2)\, d\mu_2 \\
		& \quad + \int_{\mu_2=0}^{x} F_X(x-\mu_2)g_1(\mu_2)\, d\mu_2 , 0 \leq x < M_{R2}
	\end{split}
\end{equation}
\begin{equation}\label{APPENDIX C 5}
	\begin{split}
		G_{22}(x) & = \int_{\mu_2=0}^{M_{R2}} F_X(x-\mu_2) g_2(\mu_2)\, d\mu_2 \\
		& \quad +a_2 \int_{\mu_2=M_{R2}}^{x} F_X(x-\mu_2)g_2(\mu_2)\, d\mu_2 \\
		& \quad + b_2\int_{\mu_2=M_{R2}}^{x+M_{R2}} F_X(x+M_{R2}-\mu_2) g_2(\mu_2)\, d\mu_2,\\
		& \qquad \qquad \qquad \qquad \qquad \qquad \qquad \quad x \ge M_{R2}
	\end{split}
\end{equation}
where,
\begin{equation}\label{APPENDIX C9}
	\begin{split}
		a_2&=p_S+p_{SR1}+p_{SR2}*\mathfrak{m}*\mathfrak{n}\\
		&+(p_1+p_2+p_3)\big(\mathfrak{o}*\mathfrak{f}+(1-\mathfrak{o})\big)+P_{SR2}*(1-\mathfrak{m}),
	\end{split}
\end{equation}
\begin{equation}\label{APPENDIX C10}
	\begin{split}
		b_2& = p_{SR2}*\mathfrak{m}(1-\mathfrak{n})+(p_1+p_2+p_3)*\mathfrak{o}(1-\mathfrak{f}).
	\end{split}
\end{equation}
According to Eq. (\ref{APPENDIX C 3}), the PDF $g_2(x)$ may be defined as
\begin{equation}\label{APPENDIX C 6}
	g_2(x) = \begin{cases}
		g_{21}(x), & 0 \leq x < M_{R2} \\
		g_{22}(x), & x \ge M_{R2}.
	\end{cases}
\end{equation}
Substituting Eq. (\ref{APPENDIX C 6}) into Eq. (\ref{APPENDIX C 5}), the derivatives about $x$ on both sides of Eq. (\ref{APPENDIX C 5}) can be obtained as follows
\begin{equation}\label{APPENDIX C 7}
	\begin{split}
		g_{22}(x) & = \int_{\mu_1=0}^{M_{R2}} f_X(x-\mu_2)g_{21}(\mu_2)\, d\mu_2 \\
		& \quad + a_2 \int_{\mu_1=M_{R2}}^{x} f_X(x-\mu_2)g_{22}(\mu_2)\, d\mu_2 + b_2 \\
		& \quad \times \int_{\mu_1=M_{R2}}^{x+M_{R2}} f_X(x+M_{R2}-\mu_2)g_{22}(\mu_2)\, d\mu_2, \\
		&\qquad \qquad \qquad\qquad \qquad \qquad \qquad x \ge M_{R2}
	\end{split}
\end{equation}

Obviously, $a_2+b_2 = 1$. In addtion,with \cite{article3}, it is declared that, when $M_{R2}>\mathbb{E}[X(i)]=1/\lambda_2$, the energy update process in Eq. (\ref{eq24}) possesses a unique stationary distribution, namely, $g_2(x)$ has unique solution. Furthermore, form \cite{article3}, $g_{22}(x)$ can be assumed to have an exponential form solution, which can be expressed by $g_{22}(x) = k_2 e^{Q_2 x}$. Substituting $g_{22}(x) = k_2 e^{Q_2 x}$ and $f_X(x) = \lambda_2 e^{-\lambda_2 x}$ into Eq. (\ref{APPENDIX C 7}), we obtain

\begin{equation}\label{APPENDIX C 10}
	\begin{split}
		k_2 e^{Q_2 x} & = \int_{\mu_2=0}^{M_{R2}} \lambda_2 e^{-\lambda_2 (x-\mu_2)} g_{21}(\mu_2)\, d\mu_2 \\
		& \quad + a_2 \int_{\mu_1=M_{R2}}^{x} \lambda_2 e^{-\lambda_2 (x-\mu_2)} k_2 e^{Q_2 \mu_2}\, d\mu_2 + b_2 \\
		& \quad \times \int_{\mu_1=M_{R2}}^{x+M_{R2}} \lambda_2 e^{-\lambda_2 (x+M_{R2}-\mu_2)} k_2 e^{Q_2 \mu_2}\, d\mu_2,\\
		&\qquad \qquad \qquad\qquad \qquad \qquad \qquad x \ge M_{R2}
	\end{split}
\end{equation}
Simplifying Eq. (\ref{APPENDIX C 10}), we obtain
\begin{equation}\label{APPENDIX C 11}
	\begin{split}
		k_2 e^{Q_2 x} & = \lambda_2 e^{-\lambda_2 x}\int_{\mu_2=0}^{M_{R2}} e^{\lambda_2 \mu_2} g_{21}(\mu_2)\, d\mu_2 \\
		& \quad - \dfrac{k_2 \lambda_2 e^{\left(Q_2 M_{R2}-\lambda_2 x \right)} \left(a_2 e^{\lambda_2 M_{R2}}+b_2 \right)}{\lambda_2 + Q_2} \\
		& \quad + \dfrac{\lambda_2 \left( a_2 + b_2 e^{Q_2 M_{R2}} \right)k_2 e^{Q_2 x}}{\lambda_2 + Q_2}, x \ge M_{R2}
	\end{split}
\end{equation}
\newpage
	\begin{strip}
For Eq. (\ref{APPENDIX C 11}) to hold, the following conditions need to be satisfied

\begin{subequations}
	\begin{numcases}{}
		\dfrac{\lambda_2 \left( a_2 + b_2 e^{Q_2 M_{R2}} \right)}{\lambda_2 + Q_2} = 1, \label{APPENDIX C 12a}\\
		\dfrac{k_2 e^{Q_2 M_{R2}} \left(a_2 e^{\lambda_2 M_{R2}}+b_2 \right)}{\lambda_2 + Q_2} = \int_{\mu_2=0}^{M_{R2}} e^{\lambda_2 \mu_2} g_{21}(\mu_2)\, d\mu_2. \label{APPENDIX C 12b}
	\end{numcases}
\end{subequations}
	\end{strip}
The desirable solution $Q_2$ of Eq. (\ref{APPENDIX C 12a}) for the finite distribution of $g_{22}(x)$ may be obtained by simplifying Eq. (\ref{APPENDIX C 12a}) as follows
\begin{equation}\label{APPENDIX C 13}
	b_2 \lambda_2 e^{Q_2 M_{R2}} = \lambda_2 - a_2 \lambda_2 + Q_2= b_2 \lambda_2 + Q_2,
\end{equation}

Although, it can be easily found from Eq. (\ref{APPENDIX C 12a}) that $Q_{2_0} = 0$ is one of the solutions of $Q_2$ in Eq. (\ref{APPENDIX C 12a}), $Q_{2_0}$ does not satisfy the condition that $g_{22}(x)$ is a limiting distribution. And the other solution $Q_{2_1}$ of $Q_2$ in Eq. (\ref{APPENDIX C 12a}) can be obtained by simplifying Eq. (\ref{APPENDIX C 12a}) as
\begin{equation}\label{APPENDIX C 14}
	Q_2 = \frac{-W\left(-b_2\lambda_2 M_{R2} e^{-b_2\lambda_2 M_{R2}}\right)}{M_{R2}}-b_2\lambda_2, \quad b_2\lambda_2 M_{R2}>1.
\end{equation}
where, with $b_2\lambda_2 M_{R2} >1$, $W\left(-b_2\lambda_2 M_{R2} e^{-b_2\lambda_2 M_{R2}}\right)> -b_2\lambda_2 M_{R2}$ so that $Q_2<0$, ensuring the finite distribution of $g_{22}(x)$.

Correspondingly, when $0\leq x <M_{R2}$, substitute Eq. (\ref{APPENDIX C 6}) into Eq. (\ref{APPENDIX C 4}) and differentiate both sides of Eq. (\ref{APPENDIX C 4}) with respect to $x$. we have
\begin{equation}\label{APPENDIX C 16}
	\begin{split}
		g_{21}(x) & = b_2 \int_{\mu_2=M_{R2}}^{x+M_{R2}} f_X(x+M_{R2}-\mu_2)g_{22}(\mu_2)\, d\mu_2\\
		& \quad + \int_{\mu_2=0}^{x} f_X(x-\mu_2)g_{21}(\mu_2)\, d\mu_2, 0 \leq x <M_{R2}
	\end{split}
\end{equation}
Substituting $g_{22}(x)=k_2 e^{Q_2 x}$ and $f_X(x)=\lambda_2 e^{-\lambda_2 x}$ into Eq. (\ref{APPENDIX C 16}), we get
\begin{equation}\label{APPENDIX C 17}
	\begin{split}
		g_{21}(x) & = \lambda_2 \int_{\mu_2=0}^{x} e^{-\lambda_2 \left(x-\mu_2 \right)}g_{21}(\mu_2)\, d\mu_2 \\
		& \quad + \dfrac{b_2 k_2 \lambda_2 e^{Q_2 M_{R2}}}{\lambda_2 + Q_2} \left(e^{Q_2 x}-e^{-\lambda_2 x} \right), 0 \leq x <M_{R2}
	\end{split}
\end{equation}
Let $r(x)=\frac{b_2 k_2 \lambda_2 e^{Q_2 M_{R2}}}{\lambda_2 + Q_2} \left(e^{Q_2 x}-e^{-\lambda_2 x} \right)$, and the integral equation in Eq. (\ref{APPENDIX C 17}) can be rewritten as follows
\begin{equation}\label{APPENDIX C19}
	\begin{split}
		g_{21}(x) = \lambda_2 \int_{\mu_2=0}^{x} e^{-\lambda_2 \left(x-\mu_2 \right)}g_{21}(\mu_2)\, d\mu_2 + r(x) , 0 \leq x <M_{R2}
	\end{split}
\end{equation}
Clearly, Eq. (\ref{APPENDIX C19}) is a Volterra integral equation of the second kind, whose solution is given by \cite{article3}, \cite{article4} and \cite[eq. 2.2.1]{30}
\begin{equation}\label{APPENDIX B20}
	\begin{split}
		g_{21}(x) = r(x) + \lambda_2 \int_{t=0}^{x} r(t)\, dt,
	\end{split}
\end{equation}
Substituting $r(x)$ into Eq. (\ref{APPENDIX B20}), we obtain
\begin{equation}\label{APPENDIX C 20}
	\begin{split}
		g_{21}(x) & = \dfrac{b_2 k_2 \lambda_2 e^{Q_2 M_{R2}}}{\lambda_2 + Q_2} \left(e^{Q_2 x}-e^{-\lambda_2 x} \right) \\
		& \quad + \lambda_2 \int_{t=0}^{x} \dfrac{b_2 k_2 \lambda_2 e^{Q_2 M_{R2}}}{\lambda_2 + Q_2} \left(e^{Q_2 t}-e^{-\lambda_2 t} \right)\, dt \\
		& = \dfrac{b_2 k_2 \lambda_2 e^{Q_2 M_{R2}}\left(e^{Q_2 x}-1 \right)}{Q_2}, 0 \leq x <M_{R2}
	\end{split}
\end{equation}
Because of the unit area condition on $g_2(x)$, we have
\begin{equation}\label{APPENDIX C 21}
	\int_{x=0}^{\infty} g_2(x)\, dx = \int_{x=0}^{M_{R2}} g_{21}(x)\, dx +\int_{x=M_{R2}}^{\infty} g_{22}(x)\, dx = 1,
\end{equation}
Substituting $g_{21}(x) = \frac{b_2 k_2 \lambda_2 e^{Q_2 M_{R2}}\left(e^{Q_2 x}-1 \right)}{Q_2}$ and $g_{22}(x) = k_2 e^{Q_2 x}$ into Eq. (\ref{APPENDIX C 21}), we arrive at
\begin{equation}\label{APPENDIX C 24}
	k_2 = \dfrac{-Q_2}{M_{R2} \left(b_2 \lambda_2 + Q_2\right)},
\end{equation}
Furthermore, according to Eq. (\ref{APPENDIX C 24}), we have
\begin{equation}\label{APPENDIX C 25}
	g_{21}(x) =  \dfrac{1-e^{Q_2 x}}{M_{R2}}.
\end{equation}

Substituting Eq. (\ref{APPENDIX C 25}) into the right side of Eq. (\ref{APPENDIX C 12b}), we obtain
\begin{equation}\label{APPENDIX B277}
	\begin{split}
		\int_{\mu_2=0}^{M_{R2}} e^{\lambda_2 \mu_2} g_{21}(\mu_2)\, d\mu_2 & = \int_{\mu_2=0}^{M_{R2}} \dfrac{\left(1-e^{Q_2 x}\right)e^{\lambda_2 \mu_2}}{M_{R2}}\, d\mu_2 \\
		& = \dfrac{1-e^{\left(\lambda_2+Q_2\right)M_{R2}}}{\left(\lambda_2+Q_2\right)M_{R2}} - \dfrac{1-e^{\lambda_2 M_{R2}}}{\lambda_2 M_{R2}},
	\end{split}
\end{equation}
The equation in Eq. (\ref{APPENDIX C 12a}) leads us to conclude $\lambda_2 M_{R2} = \frac{\left(\lambda_2+Q_2\right)M_{R2}}{b_2 e^{Q_2 M_{R2}} + a_2}$. Substituting this conclusion in Eq. (\ref{APPENDIX B277}), we have
\begin{equation}\label{APPENDIX B287}
	\begin{split}
		\int_{\mu_2=0}^{M_{R2}} e^{\lambda_2 \mu_2} g_{21}(\mu_2)\, d\mu_2 = \dfrac{\left(1-e^{Q_2 M_{R2}}\right)\left(b_2+a_2 e^{\lambda_2 M_{R2}}\right)}{\left(\lambda_2+Q_2\right)M_{R2}},
	\end{split}
\end{equation}
Similarly, the conclude $1-e^{Q_2 M_{R2}} = \frac{-Q_2}{b_2\lambda_2}$ may be obtained from Eq. (\ref{APPENDIX C 13}). Substituting this conclusion in Eq. (\ref{APPENDIX B287}), we arrive at
\begin{equation}\label{APPENDIX C 28}
	\begin{split}
		\int_{\mu_2=0}^{M_{R2}} e^{\lambda_2 \mu_2} g_{21}(\mu_2)\, d\mu_2 & =\dfrac{\left(1-e^{Q_2 M_{R2}}\right)\left(b_2+a_2 e^{\lambda_2 M_{R2}}\right)}{\left(\lambda_2+Q_2\right)M_{R2}} \\
		& = \dfrac{-Q_2 \left(b_2+a_2 e^{\lambda_2 M_{R2}}\right)}{\left(\lambda_2+Q_2\right)M_{R2} b_2\lambda_2} \\
		& = \dfrac{-Q_2 e^{Q_2 M_{R2}} \left(b_2+a_2 e^{\lambda_2 M_{R2}}\right)}{M_{R2} \left(b_2\lambda_2 + Q_2\right) \left(\lambda_2+Q_2\right)} \\
		& = \dfrac{k_2 e^{Q_2 M_{R2}} \left(b_2+a_2 e^{\lambda_2 M_{R2}}\right)}{\lambda_2+Q_2}.
	\end{split}
\end{equation}
Now, the validation of Eq. (\ref{APPENDIX C 12b}) in  Eq. (\ref{APPENDIX C 28}) indicates that the unique solution $g_{21}(x)$ in Eq. (\ref{APPENDIX C 25}) for Eq. (\ref{APPENDIX C 16}) and the unique solution $g_{22}(x) = k_2 e^{Q_2 x}$ for Eq. (\ref{APPENDIX C 7}) are obtained.

\section{Proof of Theorem 2}
According to the energy storage process $B_2(i)$ with the infinite-size energy buffer in Eq. (\ref{eq10}), the variable $O_R(i)$ is defined by
\begin{equation}\label{APPENDIX A1}
	O_R(i) = \begin{cases}
		1,\quad \mathbb{P}_{I1} \\
		0,\quad \mathbb{P}_{II1}
	\end{cases}
\end{equation}
where $\mathbb{P}_{I1}$ and $\mathbb{P}_{II1}$ are the conditions required for the equations in Eq. (\ref{APPENDIX A1}) to hold. Specifically, they can be expressed as follows
\begin{equation}\label{APPENDIX A2}
	\begin{split}
		\mathbb{P}_{I1}:\,&\Big[\big(TCs = \{S,R2\} \big) \cap \mathbb{C} \cap \overline{\mathbb{N}} \Big]\\
		&\cup \Big[\big(TCs = \{S,R1,R2\}\big)  \cap \mathbb{O} \cap \overline{\mathbb{F}} \Big],
	\end{split}
\end{equation}
\begin{equation}\label{APPENDIX A3}
	\begin{split}
		\mathbb{P}_{II1}:\,& \big( TCs =\{S\} \big) \cup \big(TCs = \{S,R1\} \big) \\
		& \cup \Big[\big(TCs = \{S,R2\} \big)  \cap \mathbb{C} \cap \mathbb{N} \Big]\\
		& \cup \Big[\big(TCs = \{S,R2\} \big) \cap \overline{\mathbb{C}}  \Big]\\
		&\cup \Big[\big(TCs = \{S,R1,R2\}\big)  \cap \mathbb{O} \cap \mathbb{F}\Big]\\
		&\cup \Big[\big(TCs = \{S,R1,R2\}\big)  \cap \overline{\mathbb{O}}\Big].
	\end{split}
\end{equation}
Moreover, the variable $\theta(i)$ is defined as follows
\begin{equation}\label{APPENDIX A4}
	\theta(i) = \begin{cases}
		1,\quad B_2(i) \ge M_{R2} \\
		0,\quad B_2(i) < M_{R2}
	\end{cases}
\end{equation}
Implying Eq. (\ref{APPENDIX A1}), Eq. (\ref{APPENDIX A2}),Eq. (\ref{APPENDIX A3}) and Eq. (\ref{APPENDIX A4}), the energy storage process in Eq. (\ref{eq10}) can be re-expressed as follows
\begin{equation}\label{APPENDIX A5}
	B_2(i+1) - B_2(i) = X(i) - M_{R2}\theta(i)O_R(i).
\end{equation}
According to the law of large numbers, the average energy harvesting rate can be obtained as follows
\begin{equation}\label{APPENDIX A6}
	\mathbb{E}[X(i)] =  \lim_{N \to \infty}\frac{1}{N}\sum_{i=1}^N X(i),
\end{equation}
Similarly, the average energy consumption rate can be given by
\begin{equation}\label{APPENDIX A7}
	\begin{split}
		\mathbb{E}[M_{R2}\theta(i)O_R(i)] & = \lim_{N \to \infty}\frac{1}{N}\sum_{i=1}^N M_{R2}\theta(i)O_R(i) \\
		& = \lim_{N \to \infty}\frac{1}{N}\sum_{B_2(i)\ge M_{R2}} M_{R2} O_R(i) \\
		& \leq M_{R2}\Big\{\lim_{N \to \infty}\frac{1}{N}\sum_{i=1}^N O_R(i)\Big\},
	\end{split}
\end{equation}
And,
\begin{equation}\label{APPENDIX A8}
	\begin{split}
		\mathbb{E}[O_R(i)] & =  \lim_{N \to \infty}\frac{1}{N}\sum_{i=1}^N O_R(i) \\
		& = 1 \times \mathrm{Pr}\{O_R(i) = 1\} + 0 \times \mathrm{Pr}\{O_R(i) = 0\} \\
		& = \mathrm{Pr}\{B_2(i)-M_{R2} +X(i) \leq x,TCs = \{S,R2\},\\
		&\qquad \qquad  B_2(i) \ge M_{R2},\mathbb{M},\overline{\mathbb{N}}\} \\
		&  + \mathrm{Pr}\{B_2(i)-M_{R2} +X(i) \leq x, TCs = \{S,R1,R2\},\\
		&\qquad \qquad B_2(i) \ge M_{R2}, \mathbb{O},\overline{\mathbb{F}}\} \\
		&=b_2,
	\end{split}
\end{equation}
From Eq. (\ref{APPENDIX A7}) and Eq. (\ref{APPENDIX A8}), we obtain
\begin{equation}\label{APPENDIX A9}
	\begin{split}
		\mathbb{E}[M_{R2}\theta(i)O_R(i)] & \leq  b_1.
	\end{split}
\end{equation}
If $\psi_{R2} \leq 1$, from Eq. (\ref{eq30}), we get
\begin{equation}\label{APPENDIX A10}
	\begin{split}
		\mathbb{E}[X(i)] & \ge  M_{R2} b_2,
	\end{split}
\end{equation}
Therefore, if $\psi_1 \leq 1$, from Eq. (\ref{APPENDIX A6}), Eq. (\ref{APPENDIX A7}), Eq. (\ref{APPENDIX A9}) and Eq. (\ref{APPENDIX A10}), we obtain
\begin{equation}\label{APPENDIX A11}
	\lim_{N \to \infty}\frac{1}{N}\sum_{i=1}^N X(i) \ge \lim_{N \to \infty}\frac{1}{N}\sum_{i=1}^N M_{R2}\theta(i)O_R(i),
\end{equation}
According to Eq. (\ref{APPENDIX A5}) and Eq. (\ref{APPENDIX A11}), we have
\begin{equation}\label{APPENDIX A12}
	\lim_{N \to \infty}\frac{1}{N}\sum_{i=1}^N B_2(i+1) \ge \lim_{N \to \infty}\frac{1}{N}\sum_{i=1}^N B_2(i),
\end{equation}
Clearly, if the inequality condition holds in Eq. (\ref{APPENDIX A12}), the comparison of between $B_1(i+1)$ and $B_1(i)$ shows that the energy accumulates in the buffer over time slot, i.e., $\begin{matrix} \lim_{i \to \infty}\mathbb{E}[B_1(i)] = \infty\end{matrix}$. Therefore, the stationary distribution of $B_1(i)$ does not exist, and after a finite number of time slots, $B_1(i) > M_{R1}$ almost always hold. In addition, if the equality condition holds in Eq. (\ref{APPENDIX A11}), according to Eq. (\ref{APPENDIX A11}) and Eq. (\ref{APPENDIX A7}), we get
\begin{equation}\label{APPENDIX A13}
	\lim_{N \to \infty}\frac{1}{N}\sum_{i=1}^N X(i) = \lim_{N \to \infty}\frac{1}{N}\sum_{B_2(i)\ge M_{R2}} M_{R2} O_R(i).
\end{equation}
Eq. (\ref{APPENDIX A13}) indicates that in the energy buffer with DCSMC model, the average energy harvesting rate equals the average energy consumption rate, which is unstable \cite{29}. Therefore, the buffer may almost always provide $M_{R2}$ amount of energy.

\section{Proof of Theorem 3}
On the basis of the total probability theorem, $\mathbb{P}_{11}$ and $\mathbb{P}_{12}$ form a complete set of events accompanying event $B_1(i+1)$. Thus, the cumulative distribution function (CDF) of $B_1(i+1)$ in the energy  update process in Eq. (\ref{eq10}) can be expressed as follows

\begin{equation}\label{APPENDIX B1}
	\begin{split}
		& \mathrm{Pr}\{B_1(i+1) \leq x\} \\
		&  = \mathrm{Pr}\{B_1(i)+X(i) \leq x, TCs =\{S\}\} \\
		&  + \mathrm{Pr}\{B_1(i)+X(i) \leq x, TCs = \{S,R2\}\}\\
		&  + \mathrm{Pr}\{B_1(i)+X(i) \leq x, \big(TCs = \{S,R1\} \big) , \big(B_1(i) < M_{R1}\big)\}\\
		&  + \mathrm{Pr}\{B_1(i)+X(i) \leq x, \big(TCs = \{S,R1\} \big) , \big(B_1(i) \ge M_{R1},\\
		&\qquad \mathbb{C},\mathbb{D},\gamma_{SR2}(i)\ge\Gamma_{th},\big)\}\\
		&  + \mathrm{Pr}\{B_1(i)+X(i) \leq x, \big(TCs = \{S,R1\} \big) , \big(B_1(i) \ge M_{R1},\\
		&\qquad \mathbb{C},\mathbb{D}, \gamma_{SR2}(i)<\Gamma_{th},\gamma_{R1R2}(i)<\Gamma_{th}\big)\}\\
		&  + \mathrm{Pr}\{B_1(i)+X(i) \leq x, \big(TCs = \{S,R1,R2\} \big) ,\\
		& \qquad \big(B_2(i) \ge M_{R2},\mathbb{O}, \overline{\mathbb{F}} \big)\}\\
		&  + \mathrm{Pr}\{B_1(i)+X(i) \leq x, \big(TCs = \{S,R1,R2\} \big) ,\\
		& \qquad \big(B_2(i) \ge M_{R2},\mathbb{O}, \mathbb{F},B_1(i) \ge M_{R1},\mathbb{G} \big)\}\\
		&  + \mathrm{Pr}\{B_1(i)+X(i) \leq x, \big(TCs = \{S,R1,R2\} \big) ,\\
		&\qquad \big(B_2(i) \ge M_{R2},\mathbb{O}, \mathbb{F},B_1(i) < M_{R1} \big)\}\\
		&  + \mathrm{Pr}\{B_1(i)-M_{R1}+X(i) \leq x,\\
		&\qquad \big(TCs = \{S,R1\} \big) , \big(B_1(i) \ge M_{R1}\big), \mathbb{C} , \overline{\mathbb{D}}\}\\
		&  + \mathrm{Pr}\{B_1(i)-M_{R1}+X(i) \leq x,\\
		&\qquad \big(TCs = \{S,R1\} \big) , \big(\big(B_1(i) \ge M_{R1}\big), \mathbb{C} , \mathbb{D},\\
		&\qquad  \gamma_{SR2}(i)<\Gamma_{th},\gamma_{R1R2}(i)\ge\Gamma_{th}\big)\\
		&  + \mathrm{Pr}\{B_1(i)-M_{R1}+X(i) \leq x, \big(TCs = \{S,R1,R2\}\big) ,\\
		&\qquad \big(B_2(i) \ge M_{R2} , \mathbb{O} , \mathbb{F} ,\big(B_1(i) \ge M_{R1}\big),\overline{\mathbb{G}}\big)\}\\
		&  + \mathrm{Pr}\{B_1(i)-M_{R1}+X(i) \leq x,\big(TCs = \{S,R1,R2\}\big), \\
		&\qquad \big(B_2(i) < M_{R2} , \mathbb{O} , \big(B_1(i) \ge M_{R1}\big),\overline{\mathbb{G}}\big)\}.
	\end{split}
\end{equation}
Let $\mathrm{Pr}\{B_1(i+1) \leq x\} = G^{i+1}_1(x)$. With $i \to \infty$, if the energy buffer is in its steady state, it obeys that, $\mathrm{Pr}\{B_1(i+1) \leq x\} = G^{i+1}_1(x) = G^{i}_1(x) = G_1(x)$. In this scenario, Eq. (\ref{APPENDIX B1}) can be written as follows
\begin{equation}\label{APPENDIX B3}
	\begin{split}
		G_1(x) & = \Big[ p_S+p_{SR2}+((p_1+p_2+p_3)*o(1-f))PU2\\
		&+(p_1+p_2+p_3)(1-h)\Big] \int_{\mu_1=0}^{x} F_X(x-\mu_1)g_1(\mu_1)\, d\mu_1 \\
		&+ \Big[p_{SR1}[\mathfrak{c}*\mathfrak{d}(\mathrm{Pr}\{\gamma_{SR2}(i)\ge \Gamma_{th}\}\\
		&+\mathrm{Pr}\{\gamma_{SR2}(i)< \Gamma_{th}\}\mathrm{Pr}\{\gamma_{R1R2}(i)< \Gamma_{th}\})+1-\mathfrak{c}]\\
		&+ ((p_1+p_2+p_3)*\mathfrak{o}*\mathfrak{f}*\mathfrak{g})PU2\\
		&+((p_1+p_2+p_3)*\mathfrak{o}*\mathfrak{g})(1-PU2)\Big]\\
		& \times \int_{\mu_1=M_{R1}}^{x} F_X(x-\mu_1)g_1(\mu_1)d\mu_1\\
		& +\Big[p_{SR1}+PU2((p_1+p_2+p_3)*\mathfrak{o}*\mathfrak{f})\\
		&+(1-PU2)((p_1+p_2+p_3)*\mathfrak{o})\Big]\\
		& \times \int_{\mu_1=0}^{min(x,M_{R1})} F_X(x-\mu_1)g_1(\mu_1)\, d\mu_1 \\
		& +\Big[p_{SR1}*\mathfrak{c}((1-\mathfrak{d})+\mathfrak{d}(1-e^{-W_{SR2}\Gamma_{th}})e^{-W_{R1R2}\Gamma_{th}})\\
		&+PU2((p_1+p_2+p_3)*\mathfrak{o}*\mathfrak{f}(1-\mathfrak{g}))\\
		&+(1-PU1)((p_1+p_2+p_3)*\mathfrak{o}(1-\mathfrak{g})\Big] \\
		& \times \int_{\mu_1=M_{R1}}^{x+M_{R1}} F_X(x+M_{R1}-\mu_1)d\mu_1,
	\end{split}
\end{equation}
where $g_1(x)$ is the PDF of $B_1(i)$.$F_X(x)=1-e^{-\lambda_1 x}$ is the CDF of $X$. By simplifying Eq. (\ref{APPENDIX B3}), we arrive at
\begin{equation}\label{APPENDIX B4}
	G_1(x) = \begin{cases}
		G_{11}(x) , & 0 \leq x < M_{R1} \\
		G_{12}(x), & x \ge M_{R1}
	\end{cases}
\end{equation}
where,
\begin{equation}\label{APPENDIX B5}
	\begin{split}
		G_{11}(x) & = \int_{\mu_1=0}^{x} F_X(x-\mu_1)g_1(\mu_1)\, d\mu_1 \\
		& \quad + b_1 \int_{\mu_1=M_{R1}}^{x+M_{R1}} F_X(x+M_{R1}-\mu_1)g_1(\mu_1)\, d\mu_1, \\
		& \qquad \qquad \qquad \qquad \qquad \qquad \qquad 0\leq x < M_{R1}
	\end{split}
\end{equation}
\begin{equation}\label{APPENDIX B6}
	\begin{split}
		G_{12}(x) & = \int_{\mu_1=0}^{M_{R1}} F_X(x-\mu_1)g_1(\mu_1)\, d\mu_1 \\
		& \quad + a_1 \int_{\mu_1=M_{R1}}^{x} F_X(x-\mu_1)g_1(\mu_1)\, d\mu_1 \\
		& \quad + b_1\int_{\mu_1=M_{R1}}^{x+M_{R1}} F_X(x+M_{R1}-\mu_1)g_1(\mu_1)\, d\mu_1,\\
		& \qquad \qquad \qquad \qquad \qquad \qquad \qquad \quad x \ge M_{R1}
	\end{split}
\end{equation}
where,
\begin{equation}\label{APPENDIX B9}
	\begin{split}
		a_1&=p_S+p_{SR2}+((p_1+p_2+p_3)*\mathfrak{o}(1-\mathfrak{f}))*PU2\\
		&+(p_1+p_2+p_3)(1-e)+p_{SR1}(\mathfrak{c}*\mathfrak{d}(e^{-W_{SR2}\Gamma_{th}}\\
		&+(1-e^{-W_{SR2}\Gamma_{th}})(1-e^{-W_{R1R2}\Gamma_{th}}))\\
		&+(1-\mathfrak{c}))+((p_1+p_2+p_3)*\mathfrak{o}*\mathfrak{f}*\mathfrak{g})PU2\\
		&+((p_1+p_2+p_3)*\mathfrak{o}*\mathfrak{g})(1-PU2),
	\end{split}
\end{equation}
\begin{equation}\label{APPENDIX B10}
	\begin{split}
		b_1  & =p_{SR1}*\mathfrak{c}((1-\mathfrak{d})+\mathfrak{d}(1-e^{-W_{SR2}\Gamma_{th}})e^{-W_{R1R2}\Gamma_{th}}\\
		&+PU2((p_1+p_2+p_3)*\mathfrak{o}*\mathfrak{f}(1-\mathfrak{g})\\
		&+(1-PU2)*((p_1+p_2+p_3)*\mathfrak{o}(1-\mathfrak{g})).
	\end{split}
\end{equation}
According to Eq. (\ref{APPENDIX B4}), the PDF $g_1(x)$ may be defined as
\begin{equation}\label{APPENDIX B7}
	g_1(x) = \begin{cases}
		g_{11}(x) , & 0 \leq x < M_{R1} \\
		g_{12}(x), & x \ge M_{R1}.
	\end{cases}
\end{equation}
After substituting Eq. (\ref{APPENDIX B7}) into Eq. (\ref{APPENDIX B6}), the derivatives of Eq. (\ref{APPENDIX B6}) about $x$ can be obtained
\begin{equation}\label{APPENDIX B8}
	\begin{split}
		g_{12}(x) & = \int_{\mu_1=0}^{M_{R1}} f_X(x-\mu_1)g_{11}(\mu_1)\, d\mu_1 \\
		& \quad + a_1 \int_{\mu_1=M_{R1}}^{x} f_X(x-\mu_1)g_{12}(\mu_1)\, d\mu_1\\ 
		& \quad+ b_1  \int_{\mu_1=M_{R1}}^{x+M_{R1}} f_X(x+M_{R1}-\mu_1)g_{12}(\mu_1)\, d\mu_1,\\
		& \qquad \qquad \qquad \qquad \qquad \qquad \qquad \quad x \ge M_{R1},
	\end{split}
\end{equation}
Like $g_{22}(x)$ in appendix B, let $g_{12}(x) = k_1 e^{Q_1 x}$. Substituting $g_{12}(x) = k_1 e^{Q_1 x}$ and $f_X(x) = \lambda_1 e^{-\lambda_1 x}$ into Eq. (\ref{APPENDIX B8}), we get
\begin{equation}\label{APPENDIX B11}
	\begin{split}
		k_1 e^{Q_1 x} & = \int_{\mu_1=0}^{M_{R1}} \lambda_1 e^{-\lambda_1 (x-\mu_1)} g_{11}(\mu_1)\, d\mu_1 \\
		& \quad + \left(1-b_1 \right) \int_{\mu_1=M_{R1}}^{x} \lambda_1 e^{-\lambda_1 (x-\mu_1)} k_1 e^{Q_1 \mu_1}\, d\mu_1 + b_1 \\
		& \quad \times \int_{\mu_1=M_{R1}}^{x+M_{R1}} \lambda_1 e^{-\lambda_1 (x+M_{R1}-\mu_1)} k_1 e^{Q_1 \mu_1}\, d\mu_1, x \ge M_{R1}
	\end{split}
\end{equation}
Simplifying Eq. (\ref{APPENDIX B11}), we have
\begin{equation}\label{APPENDIX B12}
	\begin{split}
		k_1 e^{Q_1 x} & = \lambda_1 e^{-\lambda_1 x}\int_{\mu_1=0}^{M_{R1}} e^{\lambda_1 \mu_1} g_{11}(\mu_1)\, d\mu_1 \\
		& \quad - \dfrac{\lambda_1 k_1 \left[ b_1 + \left(1-b_1 \right) e^{\lambda_1 M_{R1}}\right]}{\lambda_1 + Q_1} e^{\left(Q_1 M_{R1} - \lambda_1 x\right)} \\
		& \quad + \dfrac{b_1 \lambda_1 e^{Q_1 M_{R1}} + \left(1-b_1 \right) \lambda_1}{\lambda_1 + Q_1} k_1 e^{Q_1 x}, x \ge M_{R1}
	\end{split}
\end{equation}
In order to ensure $g_{12}(x) = k_1 e^{Q_1 x}$, the coefficients of corresponding terms in Eq. (\ref{APPENDIX B12}) both sides are equal. Thus, we get
\begin{subequations}
	\begin{numcases}{}
		\dfrac{b_1 \lambda_1 e^{Q_1 M_{R1}} + a_1 \lambda_1}{\lambda_1 + Q_1} = 1, \label{APPENDIX B13a}\\ 
		\begin{aligned}&\dfrac{k_1 e^{Q_1 M_{R1}} \left[ b_1 + a_1 e^{\lambda_1 M_{R1}}\right]}{\lambda_1 + Q_1}\\
			&\qquad \qquad = \int_{\mu_1=0}^{M_{R1}} e^{\lambda_1 \mu_1} g_{11}(\mu_1)\, d\mu_1. \label{APPENDIX B13b}
		\end{aligned}
	\end{numcases}
\end{subequations}
Although, it can be easily found from Eq. (\ref{APPENDIX B13a}) that $Q_{1_0} = 0$ is one of the solutions of $Q_1$ in Eq. (\ref{APPENDIX B13a}), $Q_{1_0}$ does not satisfy the condition that $g_{12}(x)$ is a limiting distribution. And the other solution $Q_{1_1}$ of $Q_1$ in Eq. (\ref{APPENDIX B13a}) can be obtained by simplifying Eq. (\ref{APPENDIX B13a}) as
\begin{equation}\label{APPENDIX B14}
	b_1 \lambda_1 e^{Q_1 M_{R1}} = \lambda_1 - a_1 \lambda_1 + Q_1= b_1 \lambda_1 + Q_1,
\end{equation}
With Lambert W function, the solution $Q_{1_1}$ can be expressed as follows
\begin{equation}\label{APPENDIX B15}
	Q_{1} = \frac{-W\left(-b_1\lambda_1 M_{R1} e^{-b_1\lambda_1 M_{R1}}\right)}{M_{R1}}-b_1\lambda_1,
\end{equation}
Due to the property of Lambert W function, when $b_1\lambda_1 M_{R1} \leq 1$, $W\left(-b_1\lambda_1 M_{R1} e^{-b_1\lambda_1 M_{R1}}\right)= -b_1\lambda_1 M_{R1}$, as a result, $Q_{1_1}=Q_{1_0}=0$. On the other hand, when $b_1\lambda_1 M_{R1} >1$, $W\left(-b_1\lambda_1 M_{R1} e^{-b_1\lambda_1 M_{R1}}\right)> -b_1\lambda_1 M_{R1}$ so that $Q_{1_1}<0$, ensuring the limiting distribution of $g_{12}(x)$. Thus, for the stationary distribution of $g_{12}(x)$, we obtain
\begin{equation}\label{APPENDIX B16}
	Q_1 = \frac{-W\left(-b_1\lambda_1 M_{R1} e^{-b_1\lambda_1 M_{R1}}\right)}{M_{R1}}-b_1\lambda_1,\quad b_1\lambda_1 M_{R1}>1.
\end{equation}

Similarly, when $0\leq x <M_{R1}$, substituting Eq. (\ref{APPENDIX B7}) into Eq. (\ref{APPENDIX B5}), the derivatives of Eq. (\ref{APPENDIX B5}) about $x$ can be obtained
\begin{equation}\label{APPENDIX B17}
	\begin{split}
		g_{11}(x) & = b_1 \int_{\mu_1=M_{R1}}^{x+M_{R1}} f_X(x+M_{R1}-\mu_1)g_{12}(\mu_1)\, d\mu_1\\
		& \quad + \int_{\mu_1=0}^{x} f_X(x-\mu_1)g_{11}(\mu_1)\, d\mu_1, 0 \leq x <M_{R1}
	\end{split}
\end{equation}
Substituting $g_{12}(x)=k_1 e^{Q_1 x}$ and $f_X(x)=\lambda_1 e^{-\lambda_1 x}$ into Eq. (\ref{APPENDIX B17}), we get
\begin{equation}\label{APPENDIX B18}
	\begin{split}
		g_{11}(x) & = \lambda_1 \int_{\mu_1=0}^{x} e^{-\lambda_1 \left(x-\mu \right)}g_{11}(\mu_1)\, d\mu_1 \\
		& \quad + \dfrac{b_1 k_1 \lambda_1 e^{Q_1 M_{R1}}}{\lambda_1 + Q_1} \left(e^{Q_1 x}-e^{-\lambda_1 x} \right), 0 \leq x <M_{R1}
	\end{split}
\end{equation}
Similar to $ g_{21}(x)$ in appendix B, the solution of $g_{11}(x)$ may be given as follows
\begin{equation}\label{APPENDIX B21}
	\begin{split}
		g_{11}(x) & = \dfrac{b_1 k_1 \lambda_1 e^{Q_1 M_{R1}}}{\lambda_1 + Q_1} \left(e^{Q_1 x}-e^{-\lambda_1 x} \right) \\
		& \quad + \lambda_1 \int_{t=0}^{x} \frac{b_1 k_1 \lambda_1 e^{Q_1 M_{R1}}}{\lambda_1 + Q_1} \left(e^{Q_1 t}-e^{-\lambda_1 t} \right)\, dt \\
		& = \dfrac{b_1 k_1 \lambda_1 e^{Q_1 M_{R1}}\left(e^{Q_1 x}-1 \right)}{Q_1}, 0 \leq x <M_{R1}
	\end{split}
\end{equation}
According to the unit area condition on $g_1(x)$, we have
\begin{equation}\label{APPENDIX B22}
	\int_{x=0}^{\infty} g_1(x)\, dx = \int_{x=0}^{M_{R1}} g_{11}(x)\, dx +\int_{x=M_{R1}}^{\infty} g_{12}(x)\, dx = 1,
\end{equation}
Substituting $g_{11}(x) = \frac{b_1 k_1 \lambda_1 e^{Q_1 M_{R1}}\left(e^{Q_1 x}-1 \right)}{Q_1}$ and $g_{12}(x) = k_1 e^{Q_1 x}$ into Eq. (\ref{APPENDIX B22}), we get
\begin{equation}\label{APPENDIX B23}
	\dfrac{b_1 k_1 \lambda_1 e^{Q_1 M_{R1}}}{Q_1} \int_{x=0}^{M_{R1}} \left(e^{Q_1 x}-1 \right)\, dx +  k_1 \int_{x=M_{R1}}^{\infty} e^{Q_1 x}\, dx =1,
\end{equation}
Simplifying Eq. (\ref{APPENDIX B23}), we have
\begin{equation}\label{APPENDIX B24}
	\dfrac{b_1 k_1 \lambda_1 e^{Q_1 M_{R1}}}{Q_1} \left[\dfrac{e^{Q_1 M_{R1}}-1}{Q_1}-M_{R1} \right] - \dfrac{k_1 e^{Q_1 M_{R1}}}{Q_1}=1,
\end{equation}
Substituting Eq. (\ref{APPENDIX B14}) into Eq. (\ref{APPENDIX B24}), then simplifying Eq. (\ref{APPENDIX B24}), the value of $k_1$ can be obtained as follows
\begin{equation}\label{APPENDIX B25}
	k_1 = \dfrac{-Q_1}{M_{R1} \left(b_1 \lambda_1 + Q_1\right)},
\end{equation}
Substituting Eq. (\ref{APPENDIX B14}) and Eq. (\ref{APPENDIX B25}) into Eq. (\ref{APPENDIX B21}), we arrive at
\begin{equation}\label{APPENDIX B26}
	g_{11}(x) =  \dfrac{1-e^{Q_1 x}}{M_{R1}}.
\end{equation}
Substituting Eq. (\ref{APPENDIX B26}) into the right side of Eq. (\ref{APPENDIX B13b}), we obtain
\begin{equation}\label{APPENDIX B27}
	\begin{split}
		\int_{\mu_1=0}^{M_{R1}} e^{\lambda_1 \mu_1} g_{11}(\mu_1)\, d\mu_1 & = \int_{\mu_1=0}^{M_{R1}} \dfrac{\left(1-e^{Q_1 x}\right)e^{\lambda_1 \mu_1}}{M_{R1}}\, d\mu_1 \\
		& = \dfrac{1-e^{\left(\lambda_1+Q_1\right)M_{R1}}}{\left(\lambda_1+Q_1\right)M_{R1}} - \dfrac{1-e^{\lambda_1 M_{R1}}}{\lambda_1 M_{R1}},
	\end{split}
\end{equation}
The equation in Eq. (\ref{APPENDIX B14}) leads us to conclude $\lambda_1 M_{R1} = \frac{\left(\lambda_1+Q_1\right)M_{R1}}{b_1 e^{Q_1 M_{R1}} + a_1}$. Substituting this conclusion in Eq. (\ref{APPENDIX B27}), we have
\begin{equation}\label{APPENDIX B28}
	\begin{split}
		\int_{\mu_1=0}^{M_{R1}} e^{\lambda_1 \mu_1} g_{11}(\mu_1)\, d\mu_1 = \dfrac{\left(1-e^{Q_1 M_{R1}}\right)\left(b_1+a_1 e^{\lambda_1 M_{R1}}\right)}{\left(\lambda_1+Q_1\right)M_{R1}},
	\end{split}
\end{equation}
Similarly, the conclude $1-e^{Q_1 M_{R1}} = \frac{-Q_1}{b_1\lambda_1}$ may be obtained from Eq. (\ref{APPENDIX B14}). Substituting this conclusion in Eq. (\ref{APPENDIX B28}), we arrive at
\begin{equation}\label{APPENDIX B29}
	\begin{split}
		\int_{\mu_1=0}^{M_{R1}} e^{\lambda_1 \mu_1} g_{11}(\mu_1)\, d\mu_1 & = \dfrac{-Q_1 \left(b_1+a_1 e^{\lambda_1 M_{R1}}\right)}{\left(\lambda_1+Q_1\right)M_{R1} b_1\lambda_1} \\
		& = \dfrac{-Q_1 e^{Q_1 M_{R1}} \left(b_1+a_1 e^{\lambda_1 M_{R1}}\right)}{\left(\lambda_1+Q_1\right)M_{R1} b_1\lambda_1 e^{Q_1 M_{R1}}} \\
		& = \dfrac{-Q_1 e^{Q_1 M_{R1}} \left(b_1+a_1 e^{\lambda_1 M_{R1}}\right)}{M_{R1} \left(b_1\lambda_1 + Q_1\right) \left(\lambda_1+Q_1\right)} \\
		& = \dfrac{k_1 e^{Q_1 M_{R1}} \left(b_1+a_1 e^{\lambda_1 M_{R1}}\right)}{\lambda_1+Q_1}.
	\end{split}
\end{equation}
Now, it can be shown that $g_{11}(x)$ satisfies the condition in Eq. (\ref{APPENDIX B13b}). Therefore, there is no doubt that the unique solution $g_{11}(x)$ in Eq. (\ref{APPENDIX B26}) for Eq. (\ref{APPENDIX B17}) and the unique solution $g_{12}(x) = k_1 e^{Q_1 x}$ for Eq. (\ref{APPENDIX B8}) are obtained.



\ifCLASSOPTIONcaptionsoff
  \newpage
\fi



\end{document}